\title{The Expressive Power of Uniform Population Protocols with Logarithmic Space} 
\author{Philipp Czerner}{Technical University of Munich, Germany \and \url{https://nicze.de/philipp}}{czerner@in.tum.de}{https://orcid.org/0000-0002-1786-9592}{}
\author{Vincent Fischer}{Technical University of Munich, Germany}{vincent.fischer@tum.de}{https://orcid.org/0009-0009-3071-0736}{}
\author{Roland Guttenberg}{Technical University of Munich, Germany}{guttenbe@in.tum.de}{https://orcid.org/0000-0001-6140-6707}{}
\authorrunning{P.\ Czerner, V.\ Fischer and R.\ Guttenberg} 
\keywords{Population Protocols, Uniform, Expressive Power} 
\newcommand{\N}{\mathbb{N}}
\newcommand{\Z}{\mathbb{Z}}
\newcommand{\NL}{\mathsf{NL}}
\def\O{\mathcal{O}}
\newcommand{\Abs}[1]{\mathopen|#1\mathclose|}
\definecolor{niceredbright}{HTML}{bd0310}
\definecolor{nicebluebright}{HTML}{197b9b}
\definecolor{nicered}{HTML}{7f0a13}
\definecolor{niceblue}{HTML}{104354}
\definecolor{nicegreen}{HTML}{217516}
\definecolor{nicepurple}{HTML}{884bab}
\definecolor{nicebg}{HTML}{f6f0e4}
\definecolor{niceredlight}{HTML}{c9888d}
\definecolor{nicebluelight}{HTML}{78a4b8}
\definecolor{nicegreenlight}{HTML}{76de68}
\definecolor{nicepurplelight}{HTML}{bc87db}
\newcommand{\TraNs}{tra}
\newcommand{\TraName}[1]{\tag*{⟨\textsf{#1}⟩}\label{\TraNs:#1}}
\newcommand{\TraRef}[1]{\text{\ref{\TraNs:#1}}}
\newcommand{\Multiset}[1]{\{#1\}}
\newcommand{\Alphabet}[0]{\Sigma}
\newcommand{\Transitions}[0]{\delta}
\newcommand{\Input}[0]{I}
\newcommand{\Output}[0]{O}
\newcommand{\Config}[0]{C}
\newcommand{\Prot}[0]{\mathcal{P}}
\newcommand{\FiniteProt}[0]{\mathcal{P}}
\newcommand{\Card}[1]{|#1|}
\newcommand{\SPACE}[0]{\mathsf{SPACE}}
\newcommand{\SNSPACE}[0]{\mathsf{SNSPACE}}
\newcommand{\UNSPACE}[0]{\mathsf{UNSPACE}}
\newcommand{\UENC}[0]{\mathsf{UENC}}
\newcommand{\UNL}[0]{\mathsf{UNL}}
\newcommand{\NSPACE}[0]{\mathsf{NSPACE}}
\newcommand{\coNSPACE}[0]{\mathsf{coNSPACE}}
\newcommand{\polylog}[0]{\operatorname{polylog}}
\newcommand{\StateComplexity}[0]{S}
\newcommand{\UPP}[0]{\mathsf{UPP}}
\newcommand{\NUPP}[0]{\mathsf{WUPP}}
\begin{document}

\maketitle

\begin{abstract}
Population protocols are a model of computation in which indistinguishable mobile agents interact in pairs to decide a property of their initial configuration. Originally introduced by Angluin et.\ al.\ in 2004 with a constant number of states, research nowadays focuses on protocols where the space usage depends on the number of agents. The expressive power of population protocols has so far however only been determined for protocols using $o(\log n)$ states, which compute only semilinear predicates, and for $\Omega(n)$ states. This leaves a significant gap, particularly concerning protocols with $\Theta(\log n)$ or $\Theta(\operatorname{polylog} n)$ states, which are the most common constructions in the literature. In this paper we close the gap and prove that for any $\varepsilon>0$ and $f\in\Omega(\log n)\cap\mathcal{O}(n^{1-\varepsilon})$, both uniform and non-uniform population protocols with $\Theta(f(n))$ states can decide exactly those predicates, whose unary encoding lies in $\mathsf{NSPACE}(f(n) \log n)$.
\end{abstract}

\section{Introduction}

Population protocols are a model of computation in which indistinguishable mobile agents randomly interact in pairs to decide whether their initial configuration satisfies a given property. The decision is taken by \emph{stable consensus}; eventually all agents agree on whether the property holds or not, and never change their mind again. While originally introduced to model sensor networks \cite{ADFP04}, population protocols are also very close to chemical reaction networks \cite{SoloveichikCWB08}, a model in which agents are molecules and interactions are chemical reactions. 

Originally agents were assumed to have a finite number of states \cite{ADFP04, AngluinAE06, AngluinAER07}, however many predicates then provably require at least \(\Omega(n)\) time to decide \cite{DotyS15, BellevilleDS17, AlistarhAEGR17}, as opposed to recent breakthroughs of \(\O(\log n)\) time using \(\O(\log n)\) or even fewer states for important tasks like leader election \cite{BerenbrinkGK20} and majority  \cite{DotyEGSUS21}. Limitting the number of states to logarithmic is important in most applications, especially the chemical reaction setting, since a linear in \(n\) number of states would imply the unrealistic number of approximately \(10^{23}\) different chemical species. Therefore most recent literature focuses on the polylogarithmic time and space setting, and determines time-space tradeoffs for various important tasks like majority \cite{AlistarhGV15, AlistarhAEGR17, AlistarhG18, ElsasserR18, BerenbrinkEFKKR21, DotyEGSUS21}, leader election \cite{AlistarhAEGR17, ElsasserR18, BerenbrinkGK20} or estimating/counting the population size \cite{DotyEMST18, DotyE19, BerenbrinkKR19, DotyE21, DotyE22}.

This leads to the interesting open problem of characterizing the class of predicates which can be computed in polylogarithmic time using a logarithmic or polylogarithmic number of states. There is however a fundamental problem with working on this question: Despite the focus on \(\O(\log n)\) number of states in recent times, the expressive power for this number of states has not yet been determined. While it is known that protocols with \(o(\log n)\) number of states can only compute semilinear predicates  \cite{AngluinAER07, ChatzigiannakisMNPS11} and with \(f(n) \in \Omega(n)\) states the expressive power is \(\UNSPACE(n \log f(n))\) \cite{ChatzigiannakisMNPS11}, i.e.\ predicates which can be decided in \(\NSPACE(n \log f(n))\), when the input is encoded in unary, the important case of having logarithmically many states is unknown. To the best of our knowledge, the only research in this direction is \cite{BournezCR18}, where the expressive power is characterised for \(\polylog(n)\) number of states for a similar model --- not population protocols themselves. Their results do not lead to a complete characterization for \(\Theta(\log n)\) states since their construction is slightly too space-inefficient, simulating a \(\log \log n\)-space TM by approximately \(\log^2 n\) space protocols.

In this paper, we resolve this gap by proving that for functions \(f(n) \in \Omega(\log n) \cap \O(n^{1-\varepsilon})\), where $\varepsilon>0$, we have \(\UPP(f(n))=\UNSPACE(f(n) \cdot \log n)\), i.e.\ predicates computable by population protocols using \(\O(f(n))\) number of states are exactly the predicates computable by a non-deterministic Turing machine using \(\O(f(n) \cdot \log n)\) space with the input encoded in unary. The “\textsf{U}” in \(\UPP(f(n))\) stands for \emph{uniform}: Modern population protocol literature distinguishes between uniform and non-uniform protocols. In a non-uniform protocol, a different protocol is allowed to be used for every population size. While we have stated the expressive power for uniform protocols here, our complexity characterization also holds for non-uniform population protocols. 

Our results complete the picture of the expressive power of uniform protocols: For \(o(\log n)\) only semilinear predicates can be computed (open for non-uniform), for a class of \emph{reasonable} functions \(f(n) \in \Omega(\log n) \cap \O(n^{1-\varepsilon})\) , which contains most practically relevant functions\footnote{This will be clarified in the next section} we have \(\UNSPACE(\log (n) \cdot f(n))\) by our results, and for \(f \in \Omega(n)\) we have \(\UNSPACE(n \cdot \log f(n))\). (A slight gap between $\O(n^{1-\varepsilon})$ and $\Omega(n)$ remains.)

\medskip
\noindent\textbf{\textsf{Main Contribution.}}
The technically most involved part of our result is the lower bound, i.e.\ constructing a \(\O(f(n))\) space uniform population protocol simulating a \(\O(f(n) \log n)\) space Turing machine, or --- equivalently \cite{FischerMR68}, and used in our construction --- simulating a \(\O(2^{f(n) \log n})\)-bounded counter machine. Let us briefly illustrate the main techniques and difficulties towards this result. In a nutshell, the crucial difference between \(o(n)\) and $\Omega(n)$ states is the ability to assign unique identifiers to agents, and to store the population size $n$ in a single agent. In our construction, therefore, we must distribute the value of $n$ over multiple agents, and they must collaborate to compute operations involving it. We also introduce a novel approach for encoding the counters of the counter machine, as those described in previous publications such as \cite{AngluinAE06} and \cite{BournezCR18} cannot encode large enough numbers for our purposes.

\medskip
\noindent\textbf{\textsf{Overview.}}
The paper is structured as follows: In Section \ref{SectionPreliminaries} we give preliminaries and define population protocols. Section~\ref{sec:mainresult} briefly states our main result and prove the lower bound for weakly uniform poulation protocols. The proof of the matching upper bound (even for uniform protocols) is presented in Section \ref{SectionLowerBound}.

\section{Preliminaries} \label{SectionPreliminaries}

We let \(\N\) denote the set of natural numbers including \(0\) and let \(\Z\) denote the set of integers. We write \(\log n\) for the binary logarithm \(\log_2 n\).

A multiset over a set \(Q\) is a multiplicity function \(f \colon Q \to \N\), which maps every \(q \in Q\) to its number of occurances in the multiset \(f\). We denote multisets using a set notation with multiplicities, i.e.\ \(\Multiset{f(q_1) \cdot q_1, \dots, f(q_m) \cdot q_m}\). We define addition \(f+f'\) on multisets via \((f+f')(q)=f(q)+f'(q)\) for all \(q\in Q\). Multisets are compared via inclusion, defined as \(f \subseteq f' \iff f(q) \leq f'(q)\) for all \(q \in Q\). If \(f \subseteq f'\), then subtraction \(f'-f\) is defined via \((f'-f)(q)=f'(q)-f(q)\) for all \(q \in Q\). The number of elements of \(f\) is denoted \(\Card{f}\) and defined as \(\sum_{f(q) \neq 0} f(q)\) if only finitely many \(q \in Q\) fulfill \(f(q) \neq 0\), and \(\Card{f}=\infty\) otherwise. Elements \(q\) of \(Q\) are identified with the multiset \(\Multiset{1 \cdot q}\). The set of all \emph{finite} multisets over \(Q\) is denoted \(\N^Q\). Given a function \(g \colon A \to B\), its extension \(\hat{g}\) to finite multisets is \(\hat{g} \colon \N^A \to \N^B, \hat{g}(f)=\sum_{f(a) \neq 0} f(a) \cdot \{g(a)\}\).

\begin{definition}
A \emph{protocol scheme} \(\FiniteProt\) is a 5-tuple \((Q,\Alphabet, \Transitions, \Input, \Output)\) of 
\begin{itemize}
\item a (not necessarily finite) set of states \(Q\),
\item a finite input alphabet \(\Alphabet\),
\item a transition function \(\Transitions: Q \times Q \to Q \times Q\),
\item an input mapping \(\Input \colon \Sigma \to Q\),
\item an output mapping \(\Output \colon Q \to \{0,1\}\).
\end{itemize}
\end{definition}

A configuration of \(\FiniteProt\) is a \emph{finite} multiset \(\Config \in \N^Q\). A step \(\Config \to \Config'\) in \(\FiniteProt\) consists of choosing a multiset \(\{q_1, q_2\} \subseteq \Config\) and replacing \(\{q_1, q_2\}\) by \(\{\delta(q_1,q_2)\}\) or \(\{\delta(q_2,q_1)\}\), i.e.\ \(\Config'=(\Config-\{q_1,q_2\}+\{\delta(q_1,q_2)\})\). The intuition is that the configuration describes for every \(q\) the number of agents in \(q\), and a step consists of an agent in \(q_1\) exchanging messages with \(q_2\), upon which these two agents change into the states \(\delta(q_1, q_2)\). Observe that the transition function \(\delta\) distinguishes between the initiator of the exchange and the responder, while in the configuration all agents are anonymous. The number of agents is denoted \(n:=\Card{C}\).

We write \(\to^{\ast}\) for the reflexive and transitive closure of \(\to\), and say that a configuration \(\Config'\) is reachable from \(\Config\) if \(\Config \to^{\ast} \Config'\). A configuration \(\Config\) is initial if there exists a multiset \(w \in \N^{\Sigma}\) such that \(\hat{\Input}(w)=\Config\). In that case \(\Config\) is the initial configuration for input \(w\).  

A configuration \(\Config\) is a \(b\)-consensus for \(b \in \{0,1\}\) if \(\Output(q)=b\) for all \(q\) such that \(\Config(q) \neq 0\), i.e.\ if every state which occurs in the configuration has output b. A configuration \(\Config\) is stable with output \(b\) if every configuration \(\Config'\) reachable from \(\Config\) is a \(b\)-consensus. 

A run \(\rho\) is an infinite sequence of configurations \(\rho=(\Config_0, \Config_1, \dots)\) such that \(\Config_i \to \Config_{i+1}\) for all \(i \in \N\). A run is fair if for all configurations \(\Config\) which occur infinitely often in \(\rho\), i.e.\ such that there are infinitely many \(i\) with \(\Config_i=\Config\), also every configuration \(\Config'\) reachable from \(\Config\) occurs infinitely often in \(\rho\). A run has output \(b\) if some configuration \(\Config_i\) along the run is stable with output \(b\) (and hence all \(\Config_j\) for \(j \geq i\) are also stable with output \(b\)). 

An input \(w \in \N^{\Sigma}\) has output \(b\) if every fair run starting at its corresponding initial configuration \(\hat{\Input}(w)\) has output \(b\). The protocol scheme \(\FiniteProt\) computes a predicate if every input \(w\) has some output. In that case the computed predicate is the mapping \(\N^{\Sigma} \to \{0,1\}\), which maps \(w\) to the output of \(\hat{\Input}(w)\).

\begin{example}
\label{ExampleBinaryRepresentation}
Consider \(Q:=\{0\} \cup \{2^i \mid i \in \N\}\), and define \(\delta(2^i, 2^i)=(2^{i+1}, 0)\), otherwise \(\delta\) is the identity function. Let \(\Sigma=\{x\}\), and let \(x \mapsto 2^0\) be the input mapping. Then a configuration is initial if every agent is in state \(2^0\). Intuitively this protocol will eventually end up with the binary representation of the number of agents. Namely each transition preserves the total sum of all agents' values, and every actual transition (which does not simply leave the agents the same) causes an agent to enter \(0\), so this protocol in fact always reaches a terminal configuration.
For example if we start this protocol with 22 agents we will eventually reach the stable configuration \(\{1\cdot 2^1,1\cdot 2^2,1\cdot 2^4,19\cdot 0\}\), which corresponds to the binary encoding of \(22=10110_2\).
\end{example}

We now define the state complexity of a protocol scheme. A state \(q \in Q\) is \emph{coverable} from some initial configuration \(\Config_0\) if there exists a configuration \(\Config\) reachable from \(\Config_0\) which fulfills \(\Config(q) >0\). The state complexity \(\StateComplexity(n)\) of \(\FiniteProt\) for \(n\) agents is the number of states \(q \in Q\) which are coverable from some initial configuration with \(n\) agents. 

\begin{example}
In the scheme of Example \ref{ExampleBinaryRepresentation}, let \(\Config_n\) be the unique initial configuration with \(n\) agents, i.e.\ \(\Config_n(2^0)=n\) and \(\Config_n(q)=0\) otherwise. For \(n\geq 2\), the states coverable from \(\Config_n\) are exactly \(\{0\} \cup \{2^i \mid i \leq \log n\}\). Hence the state complexity is \(\StateComplexity(n)=\lfloor \log n \rfloor + 2\). 
\end{example}

As defined so far, protocol schemes are not necessarily computable. Hence actual population protocols require some uniformity condition, and that \(S(n)\) is finite for all \(n\). 

\begin{definition}
A \emph{uniform population protocol} \(\FiniteProt=(Q, \Alphabet, \Transitions, \Input, \Output)\) is a protocol scheme s.t. 1) the space complexity \(S(n)\neq \infty\) for all \(n \in \N\) and
2) there is a representation of states as binary strings and linear space Turing-machines (TMs) \(M_{\Transitions}, M_{\Input}, M_{\Output}\), where
\begin{enumerate}
\item \(M_{\Transitions}\): Given (the representation of) two states \(q_1, q_2\), \(M_{\Transitions}\) outputs \(\Transitions(q_1, q_2)\).
\item \(M_{\Input}\): Given multiset \(w\), \(M_{\Input}\) outputs a representation of \(\hat{\Input}(w)\).
\item \(M_{\Output}\): Given a state \(q\) and \(b \in \{0,1\}\), \(M_{\Output}\) checks whether \(\Output(q)=b\).
\end{enumerate}
\end{definition}

We remark that ``linear space'' then in terms of our \(n\), the number of agents, is \(\O(\log \StateComplexity(n))\) space (since the input of the machine is a representation of a state). 

In the literature on uniform population protocols, e.g.\ \cite{ChatzigiannakisOSPS10, ChatzigiannakisMNPS11, DotyEMST18, DotyE19}, often agents are defined as TMs and states hence automatically assumed to be represented as binary strings. We avoid talking about the exact implementation of a protocol via TMs because it introduces an additional logarithm in the number of states and potentially confuses the reader, while most examples are clearly computable.

\begin{example}
In the protocol scheme of Example \ref{ExampleBinaryRepresentation} we represent states by the binary representation of the exponent. Clearly incrementing natural numbers or setting the number to a fixed value are possible by a linear space TM, hence this is a uniform population protocol.
\end{example}

Next we define a more general class of population protocols, which we call weakly uniform. This class includes all known population protocols, and our results also hold for this class, which shows that having a different protocol for every \(n\) does not strengthen the model.

\begin{definition}
A \emph{finite population protocol} is a protocol scheme with a finite set \(Q\).

A \emph{population protocol} \(\Prot\) is an infinite family \((\FiniteProt_n)_{n \in \N}=(Q_n, \Alphabet, \Transitions_n, \Input_n, \Output_n)_n\) of finite population protocols. The state complexity for inputs of size \(n\) is \(\StateComplexity(n):=|Q_n|\).

\(\Prot\) is \emph{weakly uniform} if there exist TMs \(M_{\Transitions}, M_{\Input}, M_{\Output}\) using \(\O(\StateComplexity(n))\) space which:
\begin{enumerate}
\item \(M_{\Transitions}\): Given two states \(q_1, q_2\) and \(n \in \N\) in unary, \(M_{\Transitions}\) outputs \(\Transitions_n(q_1, q_2)\).
\item \(M_{\Input}\): Given multiset \(w\) with \(n\) elements, \(M_{\Input}\) outputs a representation of \(\hat{\Input_n}(w)\).
\item \(M_{\Output}\): Given a state \(q\), \(b \in \{0,1\}\)  and \(n \in \N\) in unary, \(M_{\Output}\) checks whether \(\Output_n(q)=b\).
\end{enumerate}
\label{DefinitionNonUniformProtocol}
\end{definition}

The configurations of \(\Prot\) with \(n\) agents are exactly the configurations of \(\FiniteProt_n\) with \(n\) agents, and accordingly the semantics of steps, runs and acceptance are inherited from \(\FiniteProt_n\). 

The protocol for a given population size \(n\) is allowed to differ completely from the protocol for \(n-1\) agents, as long as TMs are still able to evaluate transitions, input and output. Usually this is not fully utilised, with the most common case of a non-uniform protocol being that \(\log n\) is encoded into the transition function \cite{DotyEGSUS21}. 

Clearly uniform population protocols are weakly uniform. Namely let \(\FiniteProt=(Q, \Alphabet, \Transitions, \Input, \Output)\) be a protocol scheme. Then for every \(n \in \N\) we let \(Q_n\) be the set of states coverable by some initial configuration with \(n\) agents, similar to the definition of state complexity, and define 
\(\FiniteProt_n:=(Q_n, \Alphabet, \delta_n |_{Q_n^2}, \Input, \Output |_{Q_n})\), where \(f |_{A}\) is the restriction of \(f\) to inputs in \(A\). This protocol family computes the same predicate, and is weakly-uniform 
with the same state complexity.

Next we define the complexity classes for our main result. Let \(f \colon \N \to \N\) be a function. \(f\) is space-constructible if there exists a TM \(M\) which computes \(f\) using \(\O(f(n))\) space. 
Given a space-constructible function \(f \colon \N \to \N\), we denote by \(\NSPACE(f(n))\) the class of predicates computable by a 
non-deterministic Turing-machine in \(\O(f(n))\) space. Similarly, let \(\UPP(f(n))\) be the class of predicates computable by uniform population protocols with \(\O(f(n))\) space, and \(\NUPP(f(n))\) be the class of
predicates computable by weakly-uniform population protocols with \(\O(f(n))\) space.

Population protocols decide predicates on multisets \(w\in\N^\Sigma\), or equivalently predicates on \(\N^k\) for \(k=\vert\Sigma\vert\). In order to compare the complexity classes defined on predicates with those defined on languages over an alphabet we define the unary encoding of a predicate \(\varphi\colon\N^k\longrightarrow\binset\) as the language \(L_\varphi\coloneq \left\{1^{x_1}\#1^{x_2}\#\cdots\#1^{x_k} \mid \varphi(x_1,x_2,\ldots,x_k)=1\right\}\). For any complexity class \(\mathcal{C}\) we can now define \(\UENC(\mathcal{C})\coloneq\left\{\varphi\colon\N^k\longrightarrow\binset \mid k\in \N, L_\varphi\in\mathcal{C}\right\}\) as the class of predicates whose unary encoding lies in \(\mathcal{C}\). More specificaly we define \(\UNSPACE(f(n))\coloneq\UENC(\NSPACE(f(n)))\)
\footnote{Previous work has instead used the complexity class \(\SNSPACE(f(n))\) consisting of the symmetric languages (i.e. languages closed under permutation) over the alphabet \(\Sigma\) in \(\NSPACE(f(n))\) to reflect that the agents in a population protocol are unordered. We find it more intuitive to think about a unary encoding with separators, but languages with either encoding can be polynomially reduced to the other.}.
\section{Main Result}\label{sec:mainresult}
We give a characterisation for the expressive power of both uniform and weakly uniform population protocols with $f(n)$ states, where $f\in\Omega(\log n)\cap\O(n^{1-\varepsilon})$, for some $\varepsilon>0$. For technical reasons, we must place two limitations on $f(n)$:
\begin{enumerate}
\item $f(n)=g(\lfloor\log n\rfloor)$ for some $g:\N\rightarrow\N$, i.e.\ $f$ is computable knowing only $\lfloor\log n\rfloor$.
\item $f(n)$ is space-constructible, i.e.\ the function $f$ can be computed in $\SPACE(f(n))$, and 
\item \(f(n)\) is monotonically increasing.
\end{enumerate}
All practically relevant functions fulfill these properties. For the first, we remark that “usually” \(f(n) \in \Theta(f(2^{\lfloor\log n\rfloor}))\).\footnote{The exceptions are plateau functions with large jumps.} For example, while $\sqrt{n}$ is not computable from $\lfloor\log n\rfloor$, we can instead use $\sqrt{2^ {\lfloor\log n\rfloor}}$, which is asymptotically equivalent.

In the remainder of this paper, a function $f$ with these properties is called \emph{reasonable}.

\smallskip
Our bound applies to uniform and weakly uniform protocols. As mentioned in the previous section, the latter includes, to the best of our knowledge, all non-uniform constructions from the literature.

\begin{theorem}\label{TheoremFinal}
Let $\varepsilon>0$ and let $f\in \Omega(\log n) \cap \O(n^{1-\varepsilon})$ be reasonable. Then \[\UPP(f(n))=\NUPP(f(n))=\UNSPACE(f(n) \cdot \log n).\]
\end{theorem}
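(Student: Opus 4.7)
The plan is to establish the theorem by showing the chain of inclusions $\UPP(f(n)) \subseteq \NUPP(f(n)) \subseteq \UNSPACE(f(n)\log n) \subseteq \UPP(f(n))$. The first inclusion is immediate: every uniform protocol scheme induces, via the restriction to coverable states sketched after Definition~\ref{DefinitionNonUniformProtocol}, a weakly uniform family with the same state complexity, whose per-$n$ Turing machines are obtained by invoking the single uniform machines $M_{\Transitions}, M_{\Input}, M_{\Output}$.

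For the upper bound $\NUPP(f(n)) \subseteq \UNSPACE(f(n)\log n)$, I would represent any configuration of a weakly uniform protocol with $\StateComplexity(n)=\O(f(n))$ states as a vector of at most $\O(f(n))$ multiplicity counters, each bounded by $n$ and therefore of size $\O(\log n)$ bits, so a configuration fits in $\O(f(n)\log n)$ bits. On unary input $w$ of length $n$, a nondeterministic TM first computes the initial configuration using $M_{\Input}$ within the space budget, then nondeterministically guesses a sequence of transitions leading to a candidate stable configuration $\Config$, calling $M_{\Transitions}$ to update the counters at each step. It remains to verify that $\Config$ is stable with the correct output $b$, i.e.\ that no configuration reachable from $\Config$ contains a state of output different from $b$. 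This is a universal statement over an exponential-size reachability graph, but by Immerman--Szelepcs\'enyi ($\NSPACE=\coNSPACE$) it still lies in $\NSPACE(f(n)\log n)$, completing this direction.

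The substantial content is the lower bound $\UNSPACE(f(n)\log n) \subseteq \UPP(f(n))$, for which I would pursue the strategy outlined in the introduction. By Fischer--Meyer--Rosenberg \cite{FischerMR68}, an $\O(f(n)\log n)$-space Turing machine is equivalent to a counter machine whose counters are bounded by $N \DefEq 2^{\O(f(n)\log n)} = n^{\O(f(n))}$, so it suffices to build a uniform population protocol with $\O(f(n))$ states that faithfully simulates such a bounded counter machine on the unary-encoded input. The simulation has to (i) keep the program counter of the machine in consensus across the agents, (ii) distribute the contents of each machine counter across many agents, (iii) implement \Increment, \Decrement and \ZeroTest via pairwise interactions, and (iv) stabilise the output against every continuation permitted by the adversarial scheduler, since population protocols cannot halt locally.

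The principal obstacle, and where I expect most of the technical work to concentrate, is the counter encoding. With only $\O(f(n))$ states globally, each agent carries just $\O(\log f(n))$ bits; a counting argument shows that representing counter values up to $n^{\O(f(n))}$ with $n$ such agents sits on the boundary of what is information-theoretically achievable, and is feasible precisely in the regime $f \in \O(n^{1-\varepsilon})$ assumed by the theorem. The bit-per-agent scheme of \cite{AngluinAE06} caps values at $n$, and the scheme underlying \cite{BournezCR18} at $\polylog(n)$, both falling short by a factor of $f(n)$ in the exponent. A workable encoding must therefore have agents play multiple roles at once, contributing both a position label and a partial digit to a shared positional representation, and must support \Increment, \Decrement and in particular the inherently global \ZeroTest via only pairwise interactions. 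I would expect the construction to interleave such a distributed counter with a phase-structured leader subprotocol orchestrating the global operations, and to exploit the $n^{\varepsilon}$ agents of slack afforded by the upper bound on $f$ as scratch space both for this synchronisation and for the final stabilisation layer that renders the output robust against every adversarial continuation of the halting counter-machine run.
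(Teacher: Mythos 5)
Your decomposition is exactly the paper's: $\UPP(f(n))\subseteq\NUPP(f(n))$ by restricting to coverable states, $\NUPP(f(n))\subseteq\UNSPACE(f(n)\log n)$ by simulating configurations as $\O(f(n))$ counters of $\O(\log n)$ bits and invoking Immerman--Szelepcs\'enyi to certify stability (this is Proposition~\ref{prop:upper} almost verbatim, and your version of it is essentially complete), and $\UNSPACE(f(n)\log n)\subseteq\UPP(f(n))$ by simulating a $2^{\O(f(n)\log n)}$-bounded counter machine obtained from \cite{FischerMR68}. Your diagnosis of the counter-encoding bottleneck is also the right one, and your proposed remedy --- agents carrying a position label together with a unary contribution to that position --- is in substance the paper's \textsf{Digit} encoding: $f(n)$ digits in base $\Theta(n/f(n))$, each digit stored as the number of agents at a given level, which is where the hypothesis $f\in\O(n^{1-\varepsilon})$ enters.

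The gap is that for the lower bound you have stated a plan rather than a proof, and the plan omits the two mechanisms that carry all of the technical weight in Section~\ref{sec:lowerbound}. First, you note that \Zerotest{} is ``inherently global'' but give no implementation; the paper's solution is to first compute $n$ into a distributed binary counter held by $\lfloor\log n\rfloor+1$ agents (Lemma~\ref{lem:init}), and then to realise every global operation --- detecting absence of a state, detecting that a digit is full or empty, distributing agents into digits --- as a loop that provably touches exactly $n-1$ agents, with the counter doing the bookkeeping. Without some such device, no step of your simulation can ever be known to have finished, so the phases cannot be sequenced. Second, your ``final stabilisation layer'' does not address the actual failure mode: an arbitrary subset of agents may abstain from all interactions for arbitrarily long, so leader election, the counter, and the digit layout can all be silently inconsistent at the moment the simulation starts; the paper needs an explicit reset/cleanup protocol (Lemma~\ref{lem:cleanup}) by which a surviving leader detects spurious helpers and re-initialises the entire computation, together with the assumption that the counter machine itself may nondeterministically restart so that a corrupted partial run cannot force a wrong stable consensus. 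Neither of these follows from the encoding idea, and both are needed before ``correctness follows from inspection of the transitions'' is available; as it stands the lower-bound half of your argument identifies the right construction but does not establish it.
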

\begin{proof}
This will follow directly from the upper and lower bounds given by Proposition~\ref{prop:upper} and Theorem~\ref{thm:lower}.
\end{proof}
In particular, we have \(\UPP(\log n)=\NUPP(\log n)=\UNSPACE(\log^2 n)\).
\begin{restatable}{proposition}{propupper}
\label{prop:upper}
Let $\varepsilon>0$ and let $f\in \Omega(\log n) \cap \O(n^{1-\varepsilon})$ be space-constructible. Then 
\[\UPP(f(n)) \subseteq \NUPP(f(n)) \subseteq \UNSPACE(f(n) \log n).\]
\end{restatable}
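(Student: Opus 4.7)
The first inclusion $\UPP(f(n))\subseteq\NUPP(f(n))$ is immediate from the construction given at the end of Section~\ref{SectionPreliminaries}: a uniform protocol scheme induces a weakly-uniform family by restricting each $\FiniteProt_n$ to the states coverable with $n$ agents, preserving both the state complexity and the computed predicate. My focus is therefore the inclusion $\NUPP(f(n))\subseteq\UNSPACE(f(n)\log n)$.

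Given input $w=1^{x_1}\#\cdots\#1^{x_k}$ with $n=\sum_i x_i$ agents, the plan is to decide $\varphi(w)$ in $\NSPACE(f(n)\log n)$ by directly searching the reachability graph of $\FiniteProt_n$. A configuration $\Config\in\N^{Q_n}$ is a multiset with at most $\StateComplexity(n)=\O(f(n))$ occupied states of multiplicity in $[0,n]$, so it fits in $\O(f(n)\log n)$ bits. The initial configuration $\Config_0=\hat{\Input}_n(w)$ is produced by one call to $M_{\Input}$, and a single transition step is implemented by selecting two occupied states, invoking $M_{\Transitions}$ on them, and updating the counts. Since $M_{\Transitions}$, $M_{\Output}$ and $M_{\Input}$ run in $\O(\StateComplexity(n))=\O(f(n))$ space, these subroutines stay well within the budget.

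The key semantic step I would use is the standard convergence characterisation of population protocols: $\FiniteProt_n$ outputs $b$ on $w$ if and only if every configuration $\Config$ reachable from $\Config_0$ can in turn reach a configuration $\Config'$ that is stable with output $b$. The ``only if'' direction follows by extending the path $\Config_0\to^\ast\Config$ to a fair run, which by hypothesis must reach a stable $b$-consensus; the converse uses that a fair infinite run cofinally visits some $\Config^\ast$, whose reachable successors coincide with the set of configurations visited infinitely often, forcing any stable $b$-consensus reachable from $\Config^\ast$ to actually be visited. Hence deciding $\varphi(w)=1$ reduces to verifying
\[
\forall\,\Config\colon\Config_0\to^\ast\Config\ \Rightarrow\ \exists\,\Config'\colon\Config\to^\ast\Config'\ \wedge\ \forall\,\Config''\colon\Config'\to^\ast\Config''\Rightarrow\Config''\text{ is a }1\text{-consensus}.
\]
Each atomic reachability query on configurations of size $\O(f(n)\log n)$ sits in $\NSPACE(f(n)\log n)$ by guessing a path of length at most $\binom{n+\StateComplexity(n)}{\StateComplexity(n)}\le(n+1)^{\O(f(n))}$, whose logarithm is $\O(f(n)\log n)$. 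The two universal quantifiers are eliminated via the Immerman--Szelepcs\'enyi theorem, applicable because $f(n)\log n$ is space-constructible. Checking the $1$-consensus condition on a configuration costs $\O(f(n))$ invocations of $M_{\Output}$.

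The only real difficulty is bookkeeping: confirming that configuration storage, the subroutine calls to $M_{\Input}$/$M_{\Transitions}$/$M_{\Output}$, and the nested applications of Immerman--Szelepcs\'enyi all compose within $\O(f(n)\log n)$ space, and that measuring complexity in $n$ rather than in the unary input length $n+k-1$ is harmless because $k$ is a fixed constant. No new combinatorial idea is required beyond the classical expressive-power upper bound, adapted to the weakly uniform setting in which each $\FiniteProt_n$ is available through subroutines that receive $n$ in unary.
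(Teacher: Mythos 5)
Your proposal is correct and follows essentially the same route as the paper: encode configurations of $\FiniteProt_n$ in $\O(f(n)\log n)$ bits, implement steps and consensus checks via the subroutines $M_{\Transitions}$, $M_{\Input}$, $M_{\Output}$, and use the Immerman--Szelepcs\'enyi theorem to eliminate universal quantification over reachable configurations. The only substantive difference is the acceptance criterion you verify: you use the general convergence characterisation ($\forall\,\Config$ reachable $\exists\,\Config'$ stable with output $b$), which costs two applications of Immerman--Szelepcs\'enyi, whereas the paper exploits the hypothesis that the protocol computes a total predicate to reduce this to the mere existence of one reachable configuration that is stable with output $1$, needing only a single complementation; both versions stay within $\NSPACE(f(n)\log n)$, so nothing is lost.
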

\begin{proof}
\(\UPP(f(n)) \subseteq \NUPP(f(n))\) follows since uniform protocols are also weakly-uniform.

Hence let \((\FiniteProt_n)_n=(Q_n, \Sigma, \Transitions_n, \Input_n, \Output_n)_n\) be a weakly uniform population protocol computing a predicate \(\varphi\). We have to show that there exists a TM \(M \in \NSPACE(f(n) \log n)\) computing \(\varphi\), when given the input in unary. We employ a similar argument as in the proof of the upper bound in \cite{BlondinEJ19}: First observe that a configuration of \(\FiniteProt_n\) with \(n\) agents can be described by \(|Q_n| \in \O(f(n))\) many numbers up to \(n\), i.e.\ can be stored using \(\O(f(n)\log n)\) bits. Namely one can store the number of agents per state \(q \in Q_n\). The encoding of the initial configuration can easily be calculated by simply counting the ones on the input tape corresponding to each initial state.

Since \(f\) is space-constructible, \(f(n) \log n\) is space-constructible as well. By the Immerman-Szelepcsényi theorem we have \(\NSPACE(f(n) \log n)=\coNSPACE(f(n) \log n)\).

Since the population protocol \((\FiniteProt_n)_n\) computes a predicate, either every fair run starting from the initial configuration \(\hat{\Input}(w)\) accepts or every fair run rejects. \(M\) has to determine which of these is the case. In fact, because every fair run has the same output, we claim that some configuration \(\Config\) reachable from \(\hat{\Input}(w)\) is stable for output \(1\) if and only if \(\hat{\Input}(w)\) is accepted. By definition, an accepting run visits a configuration stable for output \(1\), proving one direction, and for the other direction construct a fair run \(\rho \colon \hat{\Input}(w) \to^{\ast} \Config \to \dots\) by extending \(\hat{\Input} \to^{\ast} \Config\) in a fair way. This run is accepting, and hence also every other fair run is.

We hence construct \(M\) as follows: \(M\) applies \(M_{\Input}\) to obtain a representation of the initial configuration \(\hat{\Input}(w)\). It guesses a configuration \(\Config\), and checks using repeatedly \(M_{\delta}\) that \(\Config\) is reachable from \(\hat{\Input}(w)\). It remains to check that \(\Config\) is stable with output \(1\). A configuration is not stable for output \(1\) if and only if some configuration \(\Config'\) reachable from \(\Config\) contains an agent with output \(0\). Therefore non-stability can be checked in \(\NSPACE(f(n) \log n)\) by guessing \(\Config'\), checking using \(M_{\Output}\) that \(\Config'\) is not a \(1\)-consensus and checking reachability. By Immerman-Szelepcsényi hence also stability is decidable in \(\NSPACE(f(n) \log n)\).
\end{proof}

%
\section{Lower Bound} \label{SectionLowerBound}
\label{sec:lowerbound}

\newcommand{\Qall}{Q'}
\newcommand{\Flags}{F}
\newcommand{\Qinit}{Q_\mathrm{init}}
\newcommand{\Sinit}{S_\mathrm{init}}
\newcommand{\Tinit}{\delta_\mathrm{init}}
\newcommand{\Tcleanup}{\delta_\mathrm{cleanup}}
\newcommand{\Qabs}{Q_\mathrm{abs}}
\newcommand{\Tabs}{\delta_\mathrm{abs}}
\newcommand{\Instruction}{\mathcal{L}}
\newcommand{\InstructionLen}{l}
\newcommand{\Machine}{\mathcal{CM}}

In this section, we prove the following.

\begin{restatable}{theorem}{restateLower}\label{thm:lower}
Let \(\varepsilon>0\) and let \(f\in\Omega(\log n)\cap\O(n^{1-\varepsilon})\) be reasonable. Then \[\UNSPACE(f(n)\log n)\subseteq\UPP(f(n)).\]
\end{restatable}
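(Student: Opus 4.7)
The plan is to simulate a nondeterministic counter machine by a uniform population protocol. By Fischer--Meyer--Rosenberg~\cite{FischerMR68}, every predicate whose unary encoding lies in $\NSPACE(f(n)\log n)$ is accepted by a nondeterministic counter machine with a constant number of counters bounded by $2^{O(f(n)\log n)}=n^{O(f(n))}$. Given such a machine, I would build a protocol whose fair runs stabilise to $1$ iff the machine has an accepting computation on the input. The correctness criterion mirrors the upper-bound argument: acceptance corresponds to the reachability of a stable-$1$ configuration, so the protocol must make the stable-$1$ configuration reachable from every reachable configuration in the accepting case, and unreachable from the initial configuration in the rejecting case.

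The agents are partitioned into three roles during an initial aggregation phase modelled on Example~\ref{ExampleBinaryRepresentation}: a unique \emph{control agent} holding the current instruction pointer, $O(1)\cdot\ell$ \emph{digit agents} with $\ell=\Theta(f(n)\log n/\log f(n))$, and the remaining \emph{background agents} used for signalling, coin flips and restarts. Because $f$ is reasonable, $f(n)$ is computable from $\lfloor\log n\rfloor$, which the aggregation makes available inside agent states. The central ingredient, as announced in the introduction, is the counter encoding. Each counter value $c=\sum_i d_i B^i$ is stored across digit agents carrying $(\text{counter id},i,d_i)$ in base $B=f(n)$, so a digit state fits in $O(\log\ell+\log B)=O(\log f(n))$ bits, i.e.\ $O(f(n))$ states. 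The total number of digit agents is $O(\ell)=o(n)$ using $f(n)=O(n^{1-\varepsilon})$, leaving enough agents for every other role. Each input agent also retains its input symbol so that the original counter values (recovered by counting input agents) remain available for restarts.

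Counter-machine instructions are simulated by the control agent interacting with digit agents in sequence: increments and decrements sweep through digits in increasing position carrying a carry or borrow, while zero-tests sweep through all digits of a counter checking that each is $0$. The control agent addresses a particular digit by discarding interactions with agents of the wrong counter or position; fairness guarantees each desired meeting. All mid-instruction bookkeeping (position, carry, sub-step) is stored in the control agent in $O(\log f(n))$ bits. Nondeterministic branches of the machine are realised by the control agent reading a bit from a \emph{coin agent} whose state toggles on each interaction; because the scheduler chooses which pairs interact, every sequence of coin bits is realisable by some fair schedule. If the simulation reaches a reject state (or the control agent spontaneously chooses to restart at any instruction boundary), it resets all counters to their original values using the preserved input information; if it reaches accept, an irreversible acceptance wave propagates through the agents, driving the protocol to a stable-$1$ configuration.

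The main obstacle will be proving correctness under every fair schedule. Two properties must be checked. First, when the machine accepts, the stable-$1$ configuration must be reachable from every reachable configuration; this follows because the control agent can always restart (from any instruction boundary, overwriting mid-operation state during counter reset) and fairness then realises the accepting computation via an appropriate sequence of coin toggles and interactions. Second, when the machine rejects, no fair run may stabilise to $1$; this holds because triggering the acceptance wave requires the control agent to reach its accept state, which only occurs after a legitimate simulation, and every fair schedule simulates some possible counter-machine computation -- none of which accepts. A secondary hurdle is uniformity: the transition function on state representations must be computable in $O(\log f(n))$ space by a Turing machine, which holds because each transition is a local decoding, $O(1)$ arithmetic step, and re-encoding on two states of total size $O(\log f(n))$, with $\lfloor\log n\rfloor$ stored inside the states themselves.
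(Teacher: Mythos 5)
Your overall architecture (reduce to a counter machine via Fischer--Meyer--Rosenberg, elect a unique control agent, distribute the counters over digit agents, allow restarts to recover from bad schedules, broadcast the output) matches the paper's, but your counter encoding does not fit in the state budget, and that encoding is precisely the crux of the result. You store each digit in a \emph{single} agent as a triple $(\text{counter id},i,d_i)$ with position $i\le\ell=\Theta(f(n)\log n/\log f(n))$ and value $d_i<B=f(n)$. Counting states rather than bits, this needs $\Theta(\ell\cdot B)=\Theta\bigl(f(n)^2\log n/\log f(n)\bigr)$ distinct digit states; even the position index alone costs $\ell=\omega(f(n))$ states (for $f(n)=\log n$ this is already $\log^2 n/\log\log n$). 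The slip is in the step from ``$O(\log f(n))$ bits'' to ``$O(f(n))$ states'': $2^{O(\log f(n))}$ is $f(n)^{O(1)}$, not $O(f(n))$. This is exactly the inefficiency of the encoding of \cite{BournezCR18} that the paper identifies and must avoid. The obstacle is not cosmetic: any one-agent-per-digit scheme needs $\ell\ge c\,f(n)\log n/\log B$ positions to reach values $2^{\Theta(f(n)\log n)}$, so forcing $\ell=O(f(n))$ requires $B=n^{\Omega(1)}$, a digit value that cannot be stored inside a single agent when only $O(f(n))=O(n^{1-\varepsilon})$ states are available.

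The paper's resolution is to make each digit \emph{unary and collective}: digit $i$ holding value $d_i$ is represented by $d_i$ agents occupying the state $(i,\textsf{Digit}_1,\textsf{N}_1)$. The base can then be taken as $\Theta(n/f(n))=n^{\Omega(\varepsilon)}$ while only $O(f(n))$ digit indices (hence states) are needed, and $(n/f(n))^{f(n)}=2^{\Omega(\varepsilon f(n)\log n)}$ suffices. The price is that incrementing, decrementing and zero-testing a digit now require determining how many agents occupy a given state, i.e.\ detecting absence; this is why the paper first builds a distributed binary counter holding $n$ (spread over $\Theta(\log n)$ agents, since no single agent can store $n$) together with a loop-over-all-agents primitive, and only then implements digit arithmetic on top of it. Your proposal has no analogue of this machinery and avoids needing it only because its encoding is too state-hungry; to repair the proof you would have to adopt a collective digit representation of this kind and then supply the population-counting and looping infrastructure that makes it operable.
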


To do this we first fix a reasonable function \(f\in\Omega(\log n)\cap\O(n^{1-\varepsilon})\) and a predicate \(\varphi \colon \N^\Sigma \longrightarrow \binset\) with \(\varphi \in \UNSPACE (f(n) \log n)\). With a slight modification to the classic 3-counter simulation of a \(f(n)\) space-bounded Turing machine described in \cite{FischerMR68}, we obtain a counter machine \(\Machine\) with \(\vert\Sigma\vert\) input registers and 3 computation registers that decides \(\varphi\) using \(\mathcal{O}\left(2^{f(n)\log n}\right)\) space.

We now construct a population protocol $\Prot=(Q,\Alphabet,\Transitions,\Input,\Output)$ simulating \(\Machine\). There are three main difficulties involved in this construction:

Firstly, in order to sequence multiple operations such that an operation only starts once the previous one has finished, we need a way of performing a \emph{zero-check}, i.e.\ detecting whether an agent with a certain state exists. We achieve this by counting the number of agents using a binary encoding similar as to \cite{BournezCR18}. By keeping track of the agents already seen in an additional counter, we can then perform loops over all agents, and so detect absence of a certain state.

Secondly, we need to encode the counters of \(\Machine\), which can hold values up to \(\mathcal{O}\left(2^{f(n)\log n}\right)\). The two counter encodings described in existing literature of either counting in unary the number of agents in a special state \cite{AngluinAE06}, or the binary encoding of \cite{BournezCR18} both cannot encode numbers this large. We improve on the binary encoding by using \emph{digits} in a higher base of \(\Theta\big(\frac{n}{f(n)}\big)\) and counting in unary within each digit. Manipulating these digits makes heavy use of the looping construct mentioned above.

The final problem, which is inherent to all population protocols, is that an arbitrary number of agents may not participate in any interactions for an arbitrary long amount of time. These errors are detected at some point, but this can happen arbitrarily late. At that point, we solve this by providing a way for the simulation to re-initialize itself.

The protocol consists of multiple phases:
\begin{enumerate}
\item We count the number of agents and initialize the additional counters to zero. This process is detailed in \Cref{sec:init}. \Cref{sec:counter} describes how the counters are manipulated, and \Cref{sec:loops} presents a macro for looping over all agents using the counters for bookkeeping.
\item We set up the \emph{digits}, which encode the counters of \(\Machine\). This phase is described in \Cref{sec:digits}.
\item The instructions of \(\Machine\) are simulated. This is described in \Cref{sec:countermachine}.
\end{enumerate}

As a technical aside, as is usual we assume that the protocol is started with a sufficient number of agents (i.e.\ exceeding some constant). We argue in the proof of Theorem~\ref{thm:lower} why this is not a problem.

\subsection{State Space}
\label{sec:states}
The states will be of the form $Q=\N\times 2^{\Flags}$ for a finite set $\Flags$ of \emph{flags}. A state $(q,S)\in Q$ has \emph{level} $q$. We defer precise definitions until they become relevant.

\smallskip\noindent\textbf{\textsf{Notation.}}
To compactly denote sets of states characterised by flags, we, for example, write \((i,\textsf{Ldr}_0)\) for the set of all level $i$ states which do not include the flag \(\textsf{Ldr}\). In particular, this notation avoids mentioning other flags.

Formally, we write $(i,X^{(1)}_{b_1},...,X^{(k)}_{b_k})$, where $X^{(1)},...,X^{(k)}\in\Flags$ and $b_1,...,b_k\in\{0,1\}$ to refer to the set of all states $(i,S)$ where $S\subseteq\Flags$ fulfils $X^{(j)}\in S\Leftrightarrow b_j=1$ for all $j=1,...,k$. 

On the right-hand side of a transition, we use the same notation with a different meaning: it refers to the state where flags $X^{(1)},...,X^{(k)}$ are as given, and all other flags match the corresponding state that initiated the transition. I.e.\ similar to an assignment command, the mentioned values are set while leaving other flags the same as before.

We also use $*$ as wildcard. On the left-hand side of a transition, it matches anything, and on the right-hand side, it refers to the same value as the corresponding element of the left-hand side. For example, the transition
\[(i,\textsf{Ex}_1),(*,\textsf{Ex}_1)\mapsto(i+1),(*,\textsf{Ex}_0)\qquad\text{for }i\in\N\TraName{example}\]
means that any two agents with flag \textsf{Ex} can interact. The first moves to the next level (with flags unchanged), while the second removes the \textsf{Ex} flag (and leaves its level unchanged).

Sometimes, we want to refer to groups of flags at once, and we write $S_b$ for $S=\{X^{(1)},...,X^{(k)}\}\subseteq\Flags,b\in\{0,1\}$ instead of $X^{(1)}_b,...,X^{(k)}_b$.

\subsection{Initialisation}\label{sec:init}
Our first goal is to reach a configuration with one leader at level \(l_n:=\lfloor \log n \rfloor\), with $l_n+1$ agents each storing one bit of the binary representation of \(n\), and all other agents ``ready to be reset''. Let $b_{l_n}...b_0$ be the binary representation of $n$. Formally we want the leader in state $(l_n,\textsf{Ldr}_1,\textsf{I}_1)$, exactly one counter agent in \((j, \textsf{Ctr}_1, \textsf{N}_{b_j})\) for each \(j \leq l_n= \lceil\log n\rceil\), and all other agents in states $(*,\textsf{Ldr}_0,\textsf{Ctr}_0)$.  The flags $\textsf{Ldr},\textsf{Ctr},\textsf{Free}\in\Flags$ indicate whether the agent is currently a \emph{leader}, a \emph{counter agent}, or \emph{free}, respectively (these are exclusive). Additionally, $\textsf{N},\textsf{I}\in\Flags$, where $\textsf{N}$ indicates whether the bit of the counter is set, and $\textsf{I}$ whether the leader should perform initialisation.

Regarding the input we define $\Input(X):=(0,\{\textsf{Ctr},\textsf{N},X\})$ for $X\in\Alphabet$.

In the counter, the agents perform usual bitwise increments as in Example \ref{ExampleBinaryRepresentation}, though now expressed in terms of the exponent \(i\), and we have to leave one agent in every bit.
\[\begin{aligned}
(i,\textsf{Ctr}_1,\textsf{N}_1),(i,\textsf{Ctr}_1,\textsf{N}_1)&\mapsto (i+1),(i,\textsf{N}_0)&\qquad&\text{for }i\in\N\\
(i,\textsf{Ctr}_1,\textsf{N}_a),(i,\textsf{Ctr}_1,\textsf{N}_b)&\mapsto (i,\textsf{N}_{a+b}),(i,\textsf{Ctr}_0,\textsf{Ldr}_1,\textsf{I}_1)&\qquad&\text{for }i\in\N,a+b\le1
\end{aligned}\TraName{counter}\]
This uses the compact notation for transitions introduced above. Consider the first line. If two agents with value \(i\) are both responsible for the counter and have their \(\textsf{N}\) flag set to \(1\), then, regardless of any other flags, the outcome is as follows: The first agent increments \(i\) (leaving every flag unchanged), and the second agents sets \(\textsf{N}\) to \(0\), again leaving the rest as is.

For the second line, if --- in the same type of encounter --- at most one of the two bits \(\mathsf{N}_a\) and \(\mathsf{N}_b\) was set, then one of the agents unsets his counter flag and becomes a leader with \(\textsf{I}\) flag set to \(1\). 

This is the way for agents to originally set the leader flag. Since we want to have only one leader, we execute a leader election subprotocol. Every time a leader is eliminated, it moves into $\textsf{Free}$, and the remaining leader re-initialises.
\[\begin{aligned}
(i,\textsf{Ldr}_1),(j,\textsf{Ldr}_1)&\mapsto (i,\textsf{I}_1),(0,\textsf{Ldr}_0,\textsf{Free}_1)&&\text{for }i,j\in\N,i\ge j\\
(i,\textsf{Ldr}_1),(j,\textsf{Ctr}_1)&\mapsto (j,\textsf{I}_1),(j)&\qquad&\text{for }i,j\in\N,i<j
\end{aligned}\TraName{leader}\]
The second line causes the leader to eventually point to the most significant bit of \(n\).

\newcommand{\Val}{\operatorname{val}}
Let $\Tinit:=\TraRef{counter}\cup\TraRef{leader}$. For the following proof, as well as later sections, it will be convenient to denote the value of the counter. Given a configuration $C$ and $X\in\Flags$ we write $\Val(C,X):=\sum_{i\in\N}2^iC((i,\textsf{Ctr}_1,X_1))$. For example, the goal of the initialisation is to ensure $\Val(C,\textsf{N})=n$ at all times.

We say that a configuration is \emph{initialised}, if it has
\begin{enumerate}[(1)]
\item exactly one agent in $(l_n,\mathsf{Ldr}_1, \mathsf{Ctr}_0)$,
\item exactly one agent in $(i,\mathsf{Ctr}_1,\mathsf{N}_{b_i})$, for $i=0,...,l_n$ and $b_i$ the $i$-th bit of $n$, and
\item all other agents in $(*,\mathsf{Ldr}_0,\mathsf{Ctr}_0)$.
\end{enumerate}

\begin{restatable}{lemma}{leminit}
\label{lem:init}
Assume that each transition $t \in \delta\setminus\Tinit$ leaves flags $\mathsf{Ldr}, \mathsf{Ctr},\mathsf{N}$ unchanged, and does not affect levels of agents with the \(\mathsf{Ldr}\) or \(\mathsf{Ctr}\) flag. $\Prot$ eventually reaches an initialised configuration with an agent in $(l_n,\mathsf{Ldr}_1,\textsf{I}_1)$, and will remain in an initialised configuration.
\end{restatable}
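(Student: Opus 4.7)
The plan is to establish invariants preserved by every transition, argue termination of $\Tinit$ via a potential function, and then read off the structure of the terminal configurations. Specifically, I would prove three invariants that hold throughout any run: (a) $\Val(C,\textsf{N})=n$ — immediate initially, a direct case check for both rules of \TraRef{counter}, vacuous for \TraRef{leader}, and preserved by every other transition by the lemma's hypothesis; (b) the set of levels occupied by counter agents is a contiguous prefix $\{0,\dots,k\}$, because initially $k=0$, case 1 of \TraRef{counter} extends the prefix by at most one while leaving the lower level occupied through its $\textsf{N}_0$ residue, and no rule can empty an already-occupied counter level; and (c) consequently no counter agent ever exceeds level $l_n$, since two $\textsf{N}_1$ counters at level $l_n$ alone would contribute $2^{l_n+1}>n$ to the value, and leaders, whose levels come only from counter levels via \TraRef{counter}/\TraRef{leader}, also stay $\le l_n$.

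For termination I would use the potential
\[\phi(C)\DefEq (l_n+2)S_c(C)-L_c(C)+(l_n+1)L(C)-L_l(C),\]
where $S_c,L_c$ are the count and level-sum of counter agents and $L,L_l$ the analogous quantities for leaders. A straightforward case analysis, using $i\le l_n$ everywhere, shows that every rule of $\Tinit$ decreases $\phi$ by at least $1$, while non-$\Tinit$ transitions preserve $\phi$ because the hypothesis forbids them from changing $\textsf{Ldr},\textsf{Ctr}$ flags or the levels of $\textsf{Ldr}/\textsf{Ctr}$ agents. Since $\phi\ge 0$, only finitely many $\Tinit$ steps fire in any run.

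Combined with fairness, this forces the run to reach a configuration in which no $\Tinit$ rule is enabled. At such a configuration no two counter agents share a level, there is at most one leader, and any leader's level is no smaller than the top counter level. Feeding this back into (a)-(c) pins the counter agents to exactly one per level $i\in\{0,\dots,l_n\}$ with flag $\textsf{N}_{b_i}$, and (for $n$ large enough to force at least one counter case 2 to fire, as assumed) pins the unique leader to level $l_n$. Since every $\Tinit$ rule that creates or modifies a leader sets $\textsf{I}_1$, the last such rule in the run leaves the leader in state $(l_n,\textsf{Ldr}_1,\textsf{I}_1)$, yielding an initialised configuration. After that point no $\Tinit$ rule is enabled, and the hypothesis prevents every other rule from altering the three flags or the $\textsf{Ldr}/\textsf{Ctr}$-levels that define initialisedness, so the initialised structure persists. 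The most delicate step is the terminal analysis: one must rule out degenerate terminations where the occupied counter levels form only a strict subset of $\{0,\dots,l_n\}$, which is precisely what the contiguous-prefix invariant (b) prevents.
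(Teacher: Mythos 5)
Your proof is correct and follows essentially the same route as the paper's: the invariant $\Val(C,\mathsf{N})=n$, the resulting level bound $l_n$ on counter and leader agents, termination of $\Tinit$, and an analysis of the terminal configuration. Your two deviations are refinements rather than a different approach: the single potential function replaces the paper's lexicographic argument (which explicitly covers only \TraRef{counter} and treats leader election informally), and your contiguous-prefix invariant makes explicit a step the paper leaves implicit, namely why every level $0,\dots,l_n$ ends up occupied by exactly one counter agent rather than a proper subset whose $\mathsf{N}$-flags happen to sum to $n$.
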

\begin{proof}
We will show that eventually such a configuration is reached via a \TraRef{leader} transition. Since transitions in $\Tinit$ observe only flags $\textsf{Ldr}, \textsf{Ctr},\textsf{N}$ and levels of leader and counter agents, which by assumption no other transition can change, we can disregard all transitions in $\delta\setminus\Tinit$ for the purposes of this proof.

We have that $\Val(C,\textsf{N})$ is invariant in all reachable configurations $C$, as no transition changes its value. Further, in an initial configuration we have $\Val(C,\textsf{N})=2^0\Abs{C}=n$. Hence the level of any agent with flags \textsf{Ctr} and \textsf{N} is at most $l_n$.

Furthermore, let $n_i$ denote the number of counter agents at level $i$. Then $(n_0,...)$ decreases lexicographically with every \TraRef{counter} transition. As \TraRef{counter} is enabled as long as we have two counter agents on the same level, eventually we will have exactly one agent in \((i, \textsf{Ctr}_1)\) for every \(i=0,...,l_n\), and by the invariant \(\Val(C,\textsf{N})\) the \textsf{N} flag corresponds to the binary representation of \(n\), proving (2).

Therefore eventually no more leaders are created and transition \TraRef{leader} leaves exactly one leader. All other agents are then necessarily in states \((*, \textsf{Ldr}_0, \textsf{Ctr}_0)\), proving (3). Once the last \TraRef{leader} transition occurs, flag \textsf{I} is set on the leader and it has level $l_n$, showing (1).
\end{proof}

\newcommand{\FlagsCounter}{\Flags_{\mathrm{counter}}}
\newcommand{\Tcounter}{\delta_\mathrm{counter}}
\subsection{The Counter}
\label{sec:counter}
We created a counter during initialisation, which now contains the precise number of agents. To perform arithmetic on this counter, we designate a helper agent that executes one operation at a time. This agent uses flags $\FlagsCounter:=\{\textsf{Clr},\textsf{Incr},\textsf{Cmp},\textsf{Swap},\textsf{Done}\}$ to store the operation it is currently executing, and it uses its level to iterate over the bits of the counter. Formally, we say that an agent is a \emph{(counter) helper}, if it has one of the flags in $\FlagsCounter$.

The value stored in the counter using the \textsf{N} flag is immutable (to satisfy the assumptions of Lemma~\ref{lem:init}), so we use flags $\textsf{A},\textsf{B}$ to store two additional values in the counter agents.

The first operation clears the value in \textsf{A}, i.e.\ sets it to zero.
\[\begin{aligned}
(i,\textsf{Clr}_1),(i,\textsf{Ctr}_1)&\mapsto(i+1),(i,\textsf{A}_0)&&\text{for }i\in\N\\
(i+1,\textsf{Clr}_1),(i,\textsf{Ldr}_1)&\mapsto(0,\textsf{Clr}_0,\textsf{Done}_1),(i)\\
\end{aligned}\TraName{clear}\]
It iterates over each bit using the level. To detect that the end has been reached, the helper communicates with the leader, which always has level $l_n$.

To access the value stored in \textsf{B}, we create an operation that swaps it with \textsf{A}. It proceeds in much the same way.
\[\begin{aligned}
(i,\textsf{Swap}_1),(i,\textsf{Ctr}_1,\textsf{A}_a,\textsf{B}_b)&\mapsto(i+1),(i,\textsf{A}_b,\textsf{B}_a)&&\text{for }i\in\N,a,b\in\{0,1\}\\
(i+1,\textsf{Swap}_1),(i,\textsf{Ldr}_1)&\mapsto(0,\textsf{Swap}_0,\textsf{Done}_1),(i)\\
\end{aligned}\TraName{swap}\]
Incrementing is slightly more involved, but only because we do multiple things: we increase the value in \textsf{A} by 1, and then compare it with \textsf{N}. If they match, the value of \textsf{A} is cleared and the helper sets flag $\textsf{R}$ to indicate whether this happened.
\[\begin{aligned}
(i,\textsf{Incr}_1),(i,\textsf{Ctr}_1,\textsf{A}_1)&\mapsto(i+1),(i,\textsf{A}_0)&&\text{for }i\in\N\\
(i,\textsf{Incr}_1),(i,\textsf{Ctr}_1,\textsf{A}_0)&\mapsto(0,\textsf{Incr}_0,\textsf{Cmp}_1),(i,\textsf{A}_1)&&\text{for }i\in\N\\
(i,\textsf{Cmp}_1),(i,\textsf{Ctr}_1,\textsf{A}_a,\textsf{N}_a)&\mapsto(i+1),(i)&&\text{for }i\in\N,a\in\{0,1\}\\
(i,\textsf{Cmp}_1),(i,\textsf{Ctr}_1,\textsf{A}_a,\textsf{N}_{1-a})&\mapsto(0,\textsf{Cmp}_0,\textsf{Done}_1,\textsf{R}_0),(i)&&\text{for }i\in\N,a\in\{0,1\}\\
(i+1,\textsf{Cmp}_1),(i,\textsf{Ldr}_1)&\mapsto(0,\textsf{Cmp}_0,\textsf{Clr}_1,\textsf{R}_1),(i)&&\text{for }i\in\N\\
\end{aligned}\TraName{incr}\]

Let $\Tcounter:=\TraRef{clear}\cup\TraRef{swap}\cup\TraRef{incr}$.
\begin{observation}\label{obs:counter}
Let $\Config$ denote an initialised configuration with exactly one counter helper in state $(0,S)$. If only transitions in $\Tcounter$ are executed, $\Config$ eventually reaches a configuration $\Config'$ with
\begin{enumerate}[(1)]
\item exactly one counter helper in state $(0,S')$, where $S'\cap\FlagsCounter=\{\mathsf{Done}\}$,
\item $\Val(\Config',\mathsf{A})=0$, if $\mathsf{Clr}\in S$,
\item $\Val(\Config',\mathsf{A})=\Val(\Config,\mathsf{B}),$ and $\Val(\Config',\mathsf{B})=\Val(\Config,\mathsf{A}),$ if $\mathsf{Swap}\in S$,
\item $\Val(\Config',\mathsf{A})=\Val(\Config,\mathsf{A})+1$ and $\mathsf{R}\notin S'$, if $\mathsf{Incr}\in S$ and $\Val(\Config,\mathsf{A})+1<\Val(\Config,\mathsf{N})$,
\item $\Val(\Config',\mathsf{A})=0$ and $\mathsf{R}\in S'$, if $\mathsf{Incr}\in S$ and $\Val(\Config,\mathsf{A})+1=\Val(\Config,\mathsf{N})$.
\end{enumerate}
In cases (2), (4), and (5), we also have $\Val(\Config',\mathsf{B})=\Val(\Config,\mathsf{B})$.
\end{observation}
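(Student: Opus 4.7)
The plan is to analyse each of the three operations \emph{clear}, \emph{swap}, and \emph{increment} in turn, exploiting that in an initialised configuration there is exactly one counter agent at each level $i=0,\ldots,l_n$ and a single leader fixed at level $l_n$, and that by hypothesis only transitions in $\Tcounter$ fire. A common observation is that the number of helpers is preserved: every transition in \TraRef{clear}, \TraRef{swap}, \TraRef{incr} keeps $\mathsf{Ctr}$ on the counter agent and only moves flags within the single helper, so throughout the computation there remains exactly one helper and exactly one counter agent at each of the levels $0,\ldots,l_n$. Termination arguments rely on fairness together with the observation that with $\mathsf{Clr}$, $\mathsf{Swap}$, or $\mathsf{Cmp}$ set, the helper at level $i\le l_n$ has a unique enabled interaction (with the counter agent at level $i$), and at level $l_n+1$ a unique enabled interaction (with the leader).

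For \TraRef{clear} and \TraRef{swap} the argument is essentially identical: the helper sweeps upward, and by fairness each of these enabled interactions must eventually fire. Each sweep step advances the helper one level while clearing or swapping the $\mathsf{A}$ flag of the counter agent. Once the helper reaches level $l_n+1$, the leader transition takes it back to level $0$ and sets $\mathsf{Done}$. The claimed values of $\mathsf{A}$ and $\mathsf{B}$ then follow since no transition in $\Tcounter$ other than those just performed touches them.

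The interesting case is \TraRef{incr}, which I would split into three consecutive phases indexed by the flag the helper currently holds. In the \emph{carry} phase ($\mathsf{Incr}$), the helper sweeps up from level $0$: at each level where $\mathsf{A}=1$ it clears the bit and advances, and at the first level where $\mathsf{A}=0$ it sets the bit to $1$ and switches to $\mathsf{Cmp}$ at level $0$. The net effect on the counter bits is a standard binary increment, and termination is immediate because the helper's level strictly increases with every $\mathsf{Incr}$-transition. The hypothesis $\operatorname{val}(\Config,\mathsf{A})+1\le\operatorname{val}(\Config,\mathsf{N})=n\le 2^{l_n+1}-1$ guarantees that the carry cannot propagate past level $l_n$, so the helper never becomes stuck above the counter. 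In the \emph{compare} phase ($\mathsf{Cmp}$), the helper again sweeps from level $0$: matching bits advance it, the first mismatch takes it to $\mathsf{Done}$ with $\mathsf{R}_0$, and if all bits match, the leader transition at level $l_n+1$ takes it into $\mathsf{Clr}$ with $\mathsf{R}_1$. The \TraRef{clear} analysis then applies to zero $\mathsf{A}$ and finish in $\mathsf{Done}$, and since those transitions do not touch $\mathsf{R}$, the flag $\mathsf{R}_1$ is preserved. This yields case~(4) when $\operatorname{val}(\Config,\mathsf{A})+1<n$ (so some bit must mismatch) and case~(5) when $\operatorname{val}(\Config,\mathsf{A})+1=n$. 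The main obstacle I expect is bookkeeping along this chain: one must verify that $\mathsf{Cmp}\to\mathsf{Clr}\to\mathsf{Done}$ is the only possible branch once $\operatorname{val}(\Config,\mathsf{A})=n$ and that no intermediate interaction spuriously alters $\mathsf{R}$, both of which follow by direct inspection of the transitions in \TraRef{incr} and \TraRef{clear}.
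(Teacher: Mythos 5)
Your proposal is correct and follows the same route as the paper's (much terser) proof: each operation sweeps the helper through the counter levels, the leader at level $l_n$ serves as the end-of-counter sentinel that moves the helper to $\mathsf{Done}$, and the bound $\Val(\Config,\mathsf{A})+1\le\Val(\Config,\mathsf{N})$ rules out carry overflow in the increment. Your phase-by-phase treatment of $\mathsf{Incr}\to\mathsf{Cmp}\to\mathsf{Clr}$ just spells out the details the paper leaves to inspection.
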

\begin{proof}
Each operation iterates through the bits of the counter and performs the operations according to the above specification. Once the helper reaches level $l_n+1$, we use Lemma~\ref{lem:init} to deduce the existence of a leader at level $l_n$, causing the helper to move to \textsf{Done}. We also remark that the increment operation cannot overflow, as (by specification) $\Val(\Config,\mathsf{A})+1\le\Val(\Config,\mathsf{N})$.
\end{proof}

\subsection{Loops}
\label{sec:loops}
A common pattern is to iterate over all agents. To this end, we implement a loop functionality, which causes a loop body to be executed precisely $n-1$ times.
\[\begin{aligned}
(*,\textsf{Loop}_1,\textsf{Body}_0),(*,\textsf{Done}_1)&\mapsto(*,\textsf{Loop}_0,\textsf{LoopA}_1),(0,\textsf{Done}_0,\textsf{Incr}_1)\\
(*,\textsf{LoopA}_1),(*,\textsf{Done}_1,\textsf{R}_0)&\mapsto(*,\textsf{LoopA}_0,\textsf{Loop}_1,\textsf{Body}_1),(*)\\
(*,\textsf{LoopA}_1),(*,\textsf{Done}_1,\textsf{R}_1)&\mapsto(*,\textsf{LoopA}_0,\textsf{End}_1),(*)\\
\end{aligned}\TraName{loop}\]
This transition is to be understood as a template. Any agent can set flag \textsf{Loop}, and \TraRef{loop} will then interact with the counter, and set flag \textsf{Body}. The agent must then execute another transition removing flag \textsf{Body}, to commence another iteration of the loop. At some point, \TraRef{loop} will instead indicate that the loop is finished, by setting flag \textsf{End}.

\subsection{Cleanup}
\label{sec:cleanup}
After the initialisation of Section~\ref{sec:init}, most agents are in some state in $(*,\textsf{Ldr}_0,\textsf{Ctr}_0)$. We now want to move all of them into state $(0,\{\textsf{Free}\})$, and move the leader to $(l_n,\{\textsf{Ldr},\textsf{Start}\})$. (For intuitive explanations we sometimes elide, as here, the flags corresponding to the input $\Alphabet$, but the transitions take care to not inadvertently clear them.)

During the cleanup, we need one helper agent to perform operations on the counter. The leader will appoint one such agent and mark it using \textsf{Q}. However, it is unavoidable that sometimes such an agent may already exist. Therefore, any counter helper can cause the leader to reset, and during a reset the leader moves any such agents to $(0,\{\textsf{Free},\textsf{T}\})$. Additionally, while resetting the leader sets flag \textsf{T} on any agent it encounters.
\[\begin{aligned}
(*,\textsf{Ldr}_1,\textsf{I}_0),(*,\textsf{Q}_1)&\mapsto(*,\textsf{I}_1),(*)\\
(*,\textsf{Ldr}_1,\textsf{I}_1),(*,\textsf{Ldr}_0,\textsf{Ctr}_0)&\mapsto(*),(0,(\Flags\setminus\Alphabet)_0,\textsf{Free}_1,\textsf{T}_1)\\
(*,\textsf{Ldr}_1,\textsf{I}_1),(*,\textsf{Ctr}_1, \textsf{T}_0)&\mapsto(*),(*,\textsf{T}_1)\\
\end{aligned}\TraName{reset}\]
For the actual cleanup, the leader first appoints one free agent as helper, then uses the loop template from the previous section to iterate over all agents. Free agents are moved to $(0,\{\textsf{Free}\})$, and all other agents are left as-is. At the end of the loop, the helper is moved as well, and the leader enters \textsf{Start}, indicating that cleanup is complete. The following transition \TraRef{cleanup} part 1 is the only transition which unsets the \textsf{I} flag.
\[\begin{aligned}
  (*,\textsf{Ldr}_1,\textsf{I}_1),(*,\textsf{Free}_1)&\mapsto
                                                       \begin{array}{c}
                                                         (*,(\Flags\setminus\Alphabet)_0,\textsf{Ldr}_1,\textsf{Loop}_1),\\
                                                         (0,\textsf{Free}_0,\textsf{Clr}_1,\textsf{T}_1, \textsf{Q}_1)
                                                       \end{array}
  \\
  (*,\textsf{Ldr}_1,\textsf{Body}_1,\textsf{Start}_0),(*,\textsf{T}_1)&\mapsto(*,\textsf{Body}_0),(*,\textsf{T}_0)\\
  (*,\textsf{Ldr}_1,\textsf{End}_1,\textsf{Start}_0),(*,\textsf{Done}_1)&\mapsto(*,\textsf{End}_0,\textsf{Start}_1),(0,(\Flags\setminus\Alphabet)_0,\textsf{Free}_1)\\
\end{aligned}\TraName{cleanup}\]
Now we are ready to prove that eventually the protocol reaches a ``clean'' configuration as in the following lemma. Let $\Tcleanup:=\TraRef{clear}\cup\TraRef{swap}\cup\TraRef{incr}\cup\TraRef{loop}\cup\TraRef{reset} \cup \TraRef{cleanup}$.
\begin{restatable}{lemma}{lemcleanup} \label{lem:cleanup}
Assume that the assumptions of Lemma~\ref{lem:init} hold, and that every transition in $\delta\setminus(\Tinit\cup\Tcleanup)$
\begin{enumerate}[(a)]
\item does not change $\mathsf{I}$ or $\mathsf{Start}$,
\item does not reduce the number of counter helpers,
\item does not use any free agent or counter helper with $\mathsf{T}$ set,
\item does not use any agent with $\mathsf{Ctr}$ set, and
\item does not only use a counter helper or agents in $(*,\mathsf{Free}_1)$ or $(*,\mathsf{Start}_0)$.
\end{enumerate}
Then $\Prot$ eventually reaches an initialised configuration with
\begin{enumerate}[(1)]
\item exactly one agent in $(l_n,\{\mathsf{Ldr},\mathsf{Start}\})$ and $l_n+1$ agents in $(*,\mathsf{Ctr}_1)$, and
\item all other agents in $(0,S\cup\{\mathsf{Free}\})$ for \(S\subseteq\Sigma\), i.e.\ only $\mathsf{Free}$ and input flags are set.
\end{enumerate}
\end{restatable}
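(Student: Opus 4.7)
The plan is a reachability-plus-fairness argument. By Lemma~\ref{lem:init}, the protocol eventually reaches an initialised configuration with leader in $(l_n,\mathsf{Ldr}_1,\mathsf{I}_1)$, and assumptions (a)--(e) guarantee that no transition in $\delta\setminus(\Tinit\cup\Tcleanup)$ can afterwards destroy the init invariants (one leader at level $l_n$, $l_n+1$ counter agents encoding $n$). It then suffices to show that from every reachable configuration the goal configuration is reachable via a sequence of $\Tinit\cup\Tcleanup$ transitions; since the reachable state space is finite, fairness forces the protocol to eventually reach the goal.

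The reachability witness I would construct has four stages. (i) \emph{Reset}: if the leader has $\mathsf{I}_0$ but some helper still carries $\mathsf{Q}_1$, \TraRef{reset} line~1 restores $\mathsf{I}_1$; then lines~2 and~3 push every non-leader/non-counter agent into $(0,\mathsf{Free}_1,\mathsf{T}_1)$ and mark every counter agent with $\mathsf{T}_1$, yielding exactly $n-1$ agents carrying $\mathsf{T}_1$. (ii) \emph{Start cleanup}: \TraRef{cleanup} line~1 creates a fresh helper in $(0,\mathsf{Clr}_1,\mathsf{Q}_1,\mathsf{T}_1)$ and moves the leader to $(l_n,\mathsf{Loop}_1)$ with $\mathsf{I}_0$. (iii) \emph{Clear and loop}: by Observation~\ref{obs:counter} (whose conclusion survives the extra $\mathsf{Q}_1,\mathsf{T}_1$ flags on the helper, since $\Tcounter$ inspects only designated flags), the \textsf{Clr} operation completes, and then $n-1$ loop iterations fire, each iteration $i<n-1$ performing an increment returning $\mathsf{R}_0$ followed by \TraRef{cleanup} line~2 clearing $\mathsf{T}$ on one non-helper $\mathsf{T}_1$ agent. (iv) \emph{Finish}: the $(n-1)$st increment returns $\mathsf{R}_1$ because the counter reaches $n$, so \TraRef{loop} line~3 sets $\mathsf{End}_1$ on the leader, and \TraRef{cleanup} line~3 sets $\mathsf{Start}_1$ on the leader and moves the helper to $(0,\mathsf{Free}_1)$, clearing the last remaining $\mathsf{T}_1$ and $\mathsf{Q}_1$ flags. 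A direct count shows that the $n-2$ invocations of \TraRef{cleanup} line~2 together with the final \TraRef{cleanup} line~3 clear exactly the $n-1$ agents carrying $\mathsf{T}_1$.

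The main obstacle is checking that this path remains executable despite interference from outside transitions and despite the nondeterministic interleaving of reset and cleanup. Assumptions (a)--(e) are tailored precisely for this: transitions in $\delta\setminus(\Tinit\cup\Tcleanup)$ cannot change $\mathsf{I}$ or $\mathsf{Start}$, cannot reduce the number of counter helpers, cannot use $\mathsf{T}$-marked free agents or counter helpers, cannot use counter agents, and cannot fire while using only ``inert'' agents, so none of the invariants the path relies on can be broken externally. The reset--cleanup interplay itself poses no obstruction to reachability, since the argument needs only one specific execution realising the four stages in order; fairness then closes the argument, and the goal configuration is reached.
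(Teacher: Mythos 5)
Your overall schema --- exhibit, from every reachable configuration, a witness execution into the target set, then invoke fairness over the finite reachable state space --- is sound in principle, but the reachability claim you reduce everything to is false as stated, and the missing case is precisely the crux of the paper's proof. Once line~3 of \TraRef{cleanup} has fired, the leader carries $\mathsf{Start}$, no agent carries $\mathsf{Q}$ any longer, and the later subprocesses (\TraRef{go}, \TraRef{dist}, \dots) begin converting free agents into helpers; from such configurations the leader can never regain $\mathsf{I}$ (line~1 of \TraRef{reset} needs a responder with $\mathsf{Q}_1$), so the clean configuration is \emph{not} reachable from them. You must therefore restrict the claim to configurations reachable before the target is first visited, and to make that restriction work you need the safety statement that the \emph{only} transition setting $\mathsf{Start}$, namely line~3 of \TraRef{cleanup}, can never fire at a configuration violating (1)--(2). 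Your proposal never argues this: the four-stage path shows the clean configuration is reachable when the scheduler cooperates, but not that an adversarial schedule cannot complete the loop and set $\mathsf{Start}$ while some agent is still dirty. The paper's proof spends most of its effort exactly here --- it first forces a configuration with a single counter helper and $\mathsf{I}$ unset, and then shows that completing the loop requires all $n-1$ non-leader agents to carry $\mathsf{T}$, which (using assumptions (c) and (e)) is only possible if each of them was placed into a known clean state by lines~2--3 of \TraRef{reset} or line~1 of \TraRef{cleanup} and left untouched since. Without this argument, fairness cannot ``close'' anything, because the run may escape into a trap region without ever passing through the target set.

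A secondary, more mechanical issue: your accounting of the $\mathsf{T}$ flags is off by one. The loop body, i.e.\ line~2 of \TraRef{cleanup}, must execute exactly $n-1$ times (the counter is incremented from $0$ and returns $\mathsf{R}$ only upon reaching $n$), not $n-2$; and since its responder is required to be in $(*,\mathsf{T}_1)$, those $n-1$ firings clear $\mathsf{T}$ on $n-1$ distinct agents, possibly including the helper itself, which line~2 does not exclude. So line~2 alone accounts for all $n-1$ flags, and a path with only $n-2$ body executions would never enable line~3 at all.
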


\begin{proof}
\newcommand{\Configs}{\mathcal{C}}
Let $\Configs$ denote the set of initialised configurations with in agent in $(*,\textsf{Ldr}_1,\textsf{I}_1)$.

Lemma~\ref{lem:init} guarantees that we reach a configuration $C_1\in\Configs$. As stated there, all configurations reachable from $C_1$ are initialised. We start by arguing that $C_1$ reaches a configuration $C_2$ with exactly one counter helper and one leader with $\textsf{I}$ unset.

First, we note that it is \emph{possible} to reach such a $C_2$, by executing line~2 of \TraRef{reset} to remove all counter helpers, and then executing the first line of \TraRef{cleanup} to create one counter helper and unset $\textsf{I}$. So any fair run from $C_1$ that does not reach such a $C_2$ must avoid configurations in $\Configs$ eventually. (If it visited $\Configs$ infinitely often, by fairness it would have to reach $C_2$ at some point.)

So we now assume that $C_1$ is the last configuration in $\Configs$ on that run. The only possibility to leave $\Configs$ is to have the leader clear $\textsf{I}$, which by assumption~(a) can only be done in \TraRef{cleanup}. This transition creates a counter helper; since we do not reach $C_2$ we thus must have multiple such helpers.

By assumption~(b), the number of counter helpers can only be reduced by a transition in $\Tinit\cup\Tcleanup$. Inspecting these transitions, the only candidates are line 2 of \TraRef{reset} and line 3 of \TraRef{cleanup}. The former is only enabled at configurations in $\Configs$. The latter reduces the number of counter helpers by 1 and sets flag \textsf{Start} on the leader. This flag, by assumption~(a), cannot be cleared by any transition other than line 1 of \TraRef{cleanup}. (Note that line~3 modifies an agent that is not the leader, and there is only one leader since we are operating within initialised configurations.)

Since \textsf{Start} prevents further reductions in the number of counter helpers, at least one such helper remains. Therefore, it is possible to execute the first line of \TraRef{reset} and move back to $\Configs$. By fairness, this happens eventually, contradicting our assumption that $\Configs$ is visited finitely often and $C_2$ not reached, proving our first claim.

Reaching such a $C_2$ must be done by line~1 of \TraRef{cleanup} (since no other transition clears~$\textsf{I}$), which clears the counter and initiates a loop. As we have argued, $C_2$ is initialised and has exactly one counter helper. We now show that all fair runs from $C_2$ either reach $\Configs$ or a configuration $C_3$ fulfilling conditions (1-2).

By assumption~(d), transitions outside of $\Tinit\cup\Tcleanup$ do not interact with the counter, and by~(c) cannot interact with the counter helper (since it has \textsf{T} set). The only transitions involving the counter helper in a state other than \textsf{Done} are $\Tcounter$ and the first line of \TraRef{reset}. Since the latter moves to a configuration in $\Configs$, we may assume wlog that it does not occur.

Inspecting \TraRef{loop}, line 2 of \TraRef{cleanup} is only enabled when the counter helper is in \textsf{Done}. Similarly for line 3 of \TraRef{cleanup}. So when we move the helper to another state, we can apply Observation~\ref{obs:counter} and conclude that it performs its operation correctly. (Transitions outside of $\Tcounter$ may be executed, but cannot affect either the counter or the counter helper.)

This means that line 3 of \TraRef{cleanup} is only executed once line 2 has run exactly $n-1$ times. If $C_2(*,\textsf{Ldr}_0,\textsf{T}_1)<n-1$, this is not possible, and we go back to $\Configs$ eventually using line 1 of \TraRef{reset}. Otherwise, \textsf{T} is set on all non-leader agents and we claim that it was set by lines 2-3 of \TraRef{reset}. Namely note that by assumption~(c) no transition other than \TraRef{cleanup} may use the agents in $(*,\textsf{Free}_1,\textsf{T}_1)$ at all, and by assumption~(e) no transition may be initiated using only the counter helper, the free agents, and the leader without \textsf{Start}.

In that case, all agents with \textsf{T} set must result from lines 2-3 of \TraRef{reset}. Since it resets the (non-input) flags of all non-free agents, the leader will execute line 2 of \TraRef{cleanup} precisely $n-1$ times, and then execute line 3 once, moving to the desired configuration.
\end{proof}

\newcommand{\DigitCount}{g}
\subsection{Digits}
\label{sec:digits}
Let \(\DigitCount\) be the function, such that \(f(x)=\DigitCount\left(\left\lfloor\log x\right\rfloor\right)\) for all \(x\). For the simulation of \(\Machine\), we organise the agents into $f(n)=\DigitCount(l_n)$ many “digits”, which are counters that count up to (roughly) $n/\DigitCount(l_n)$. They do not work by storing the bits individually, as for the counters of the previous section, but instead digit $i$ is stored by having the appropriate number of agents in state $(i,\textsf{Digit}_1,\textsf{N}_1)$.

Overall, the goal is to simulate registers by using multiple digits. For example, consider $k$ digits, where digit $i$ can store a number in $0,...,n_i-1$, and currently stores $d_i$. Then the number stored by this group of digits would be $\sum_{i=1}^k(n_1\cdot...\cdot n_{i-1})d_i$. This is a generalization of standard base \(b\) number systems to allow every digit to have a different base \(n_i\).

In the previous sections, we have made use of a helper agents that could autonomously execute certain tasks (e.g.\ interacting with the counter). We will continue in this vein and designate a new agent for each task.

We start by distributing the free agents into the $\DigitCount(l_n)$ digits. This happens in a simple round-robin fashion.
\[\begin{aligned}
(*,\textsf{Dist}_1),*&\mapsto(0,\textsf{Dist}_0,\textsf{DistA}_1,\textsf{Loop}_1),*\\
(i,\textsf{DistA}_1,\textsf{Body}_1),(0,\textsf{Free}_1,\textsf{V}_0)&\mapsto(i{-}1,\textsf{Body}_0),(i,\textsf{Free}_0,\textsf{Digit}_1,\textsf{V}_1)&&\text{for }i>0\\
(*,\textsf{DistA}_1,\textsf{Body}_1),(*,\textsf{Free}_0,\textsf{V}_0,\textsf{T}_0)&\mapsto(*,\textsf{Body}_0),(*,\textsf{V}_1)\\
(0,\textsf{DistA}_1),(i,\textsf{Ldr}_1)&\mapsto(\DigitCount(i),*),(i)&&\text{for }i\in\N\\
(*,\textsf{DistA}_1,\textsf{End}_1),*&\mapsto(*,\textsf{DistA}_0,\textsf{End}_0, \textsf{DistDone}_1),*\\
\end{aligned}\TraName{dist}\]
We use a new flag \textsf{V} to mark agents that have already been
seen. This ensures that all available agents are distributed. The
restriction to $\textsf{T}_0$ is necessary to satisfy the assumptions
of Lemma~\ref{lem:cleanup} --- but once the cleanup has successfully completed, no agents will have \textsf{T} set.

\newcommand{\Tempset}{M}
Now we implement arithmetic operations on the digits. First, we give a subroutine to detect whether a digit is full (or empty). For the following transition, let $i,j\in\N$, $a,b\in\{0,1\}$ and $M\subseteq\Flags$, with $(j,M)\notin(i,\textsf{Digit}_1,\textsf{M}_a)$ and $(j,M)\in(*,\textsf{U}_{b})$.
\[\begin{aligned}
(*,\textsf{Det}_1),*&\mapsto(*,\textsf{Det}_0,\textsf{DetA}_1,\textsf{Loop}_1,\textsf{R}_0),*\\
\arraycolsep=1.4pt
\begin{array}{r}
(i,\textsf{DetA}_1,\textsf{Body}_1,\textsf{M}_a,\textsf{U}_b),\\(i,\textsf{Digit}_1,\textsf{M}_a,\textsf{U}_{b})
\end{array}
&\mapsto(i,\textsf{Body}_0,\textsf{R}_1),(*,\textsf{U}_{1-b})\\
(i,\textsf{DetA}_1,\textsf{Body}_1,\textsf{M}_a,\textsf{U}_b),(j,M)&\mapsto(i,\textsf{Body}_0),(j,\textsf{U}_{1-b})\\
(i,\textsf{DetA}_1,\textsf{End}_1,\textsf{U}_b),*&\mapsto(i{+}1,\textsf{DetA}_0,\textsf{DetDone}_1,\textsf{End}_0,\textsf{U}_{1-b}),*\\
\end{aligned}\TraName{detect}\]
This is slightly more involved. Similar to before, we mark agents that have been counted (this time using \textsf{U}). To avoid having to do a second loop which resets \textsf{U}, we instead alternate between $\textsf{U}_0$ and $\textsf{U}_1$ every time \TraRef{detect} is executed. In each iteration, we count agents by setting $\textsf{U}$ to the opposite of the value stored in the digit helper. After the loop has completed, the digit helper then flips its own \textsf{U} flag.

To use this routine on digit $i$, we move an agent into $(i,\textsf{Det}_1,\textsf{M}_b)$, where $b$ indicates whether we want to check that the digit is not empty ($b=1$) or not full ($b=0$). The output is returned using the \textsf{R} flag. (For technical reasons, the agent ends in level $i+1$ --- this will be useful when checking multiple digits.)

There are two ways to change the value of a digit $i\in\N$: incrementing and decrementing. Both are analogous, so we only describe the former. The process is straightforward: we check whether digit $i$ is already full; if it is not, we move an agent from $(i,\textsf{Digit}_1,\textsf{M}_0)$ to $(i,\textsf{Digit}_1,\textsf{M}_1)$. Otherwise the digit overflows; we have to set it to $0$ and increment digit $i+1$. (This is simply adding 1 to a number represented using multiple digits in some base.)

Similar to before, let $i,j\in\N$, $b\in\{0,1\}$ and $M\subseteq\Flags$, with $(j,M)\notin(i,\textsf{Digit}_1)$ and $(j,M)\in(*,\textsf{W}_{b})$.
\[\begin{aligned}
  (i,\textsf{DigIncr}_1),(*,\textsf{DetDone}_1)&\mapsto
                                                 \begin{array}{r}
                                                   (i,\textsf{DigIncr}_0,\textsf{DigIncrA}_1),    \\
                                                   (i,\textsf{DetDone}_0,\textsf{Det}_1,\textsf{M}_0)
                                                 \end{array}\\
(*,\textsf{DigIncrA}_1),(*,\textsf{DetDone}_1,\textsf{R}_1)&\mapsto(*,\textsf{DigIncrA}_0,\textsf{DigIncrB}_1),(*)\\
(i,\textsf{DigIncrB}_1),(i,\textsf{Digit}_1,\textsf{M}_0)&\mapsto(0,\textsf{DigIncrB}_0,\textsf{DigDone}_1),(*,\textsf{M}_1)\\
(*,\textsf{DigIncrA}_1),(*,\textsf{DetDone}_1,\textsf{R}_0)&\mapsto(*,\textsf{DigIncrA}_0,\textsf{DigIncrC}_1,\textsf{Loop}_1),(*)\\
(i,\textsf{DigIncrC}_1,\textsf{Body}_1,\textsf{W}_b),(i,\textsf{Digit}_1,\textsf{W}_b)&\mapsto(i,\textsf{Body}_0),(i,\textsf{W}_{1-b},\textsf{M}_0)\\
(i,\textsf{DigIncrC}_1,\textsf{Body}_1,\textsf{W}_b),(j,M)&\mapsto(i,\textsf{Body}_0),(i,\textsf{W}_{1-b})\\
(i,\textsf{DigIncrC}_1,\textsf{End}_1,\textsf{W}_b),*&\mapsto(i+1,\textsf{DigIncrC}_0,\textsf{End}_0,\textsf{DigIncr}_1,\textsf{W}_{1-b}),*\\
\end{aligned}\TraName{digit}\]
We define transitions for \textsf{DigDecr} analogously. 
\subsection{Counter Machine}
\label{sec:countermachine}
\newcommand{\CounterMap}{\nu_i}
\newcommand{\DigPerReg}{K}
In this section, we describe a subprocess that simulates instructions of \(\Machine\), using the digits of the previous section. Each of the $\Abs{\Alphabet}+3$ registers is simulated by $\DigPerReg=\DigitCount(l_n) / (\Abs{\Alphabet}+3)$ digits. We write $\nu_{l_n}(r)$ for the function that maps each register $r$ to its first digit. In particular, $r$ is then simulated by digits $\nu_{l_n}(r),...,\nu_{l_n}(r)+\DigPerReg-1$. Formally, we have $\nu_{l_n}(r):=1+K(r-1)$.
\subsubsection{Input}
Initially, each agent holds one input in $\Alphabet$. We need to initialise the $\Abs{\Alphabet}$ input registers of \(\Machine\) accordingly. We use a loop to make sure that all agents have been moved. However, both the loop and incrementing the digit use the counter stored in \textsf{A} by the \textsf{Ctr} agents; therefore, we swap \textsf{A} and \textsf{B} to switch between them.

Let $X,Y\in\Alphabet$ denote inputs, where $X$ is stored in digits $r,...,r+\DigPerReg-1$.
\[\begin{aligned}
(*,\textsf{Inp}_1,X_1,\textsf{O}_0),(*,\textsf{Done}_1)&\mapsto(*,\textsf{Inp}_0,\textsf{InpA}_1),(0,\textsf{Swap}_1)\\
(*,\textsf{InpA}_1,X_1,\textsf{O}_0),(*,\textsf{DigDone}_1)&\mapsto(*,\textsf{InpA}_0,\textsf{InpB}_1,X_0,\textsf{O}_1),(r,\textsf{DigIncr}_1)\\
(*,\textsf{InpB}_1),(*,\textsf{DigDone}_1)&\mapsto(*,\textsf{InpB}_0,\textsf{InpC}_1),(*)\\
(*,\textsf{InpC}_1),(*,\textsf{Done}_1)&\mapsto(*,\textsf{InpC}_0,\textsf{Inp}_1,\textsf{Loop}_1,\textsf{Body}_0),(*,\textsf{Swap}_1)\\
(*,\textsf{Inp}_1,\textsf{Body}_1,X_1,\textsf{O}_1),(*,Y_1,\textsf{O}_0)&\mapsto(*,X_0,Y_1,\textsf{O}_0),(*,X_1,Y_0,\textsf{O}_1)\\
(*,\textsf{Inp}_1,\textsf{End}_1,\Alphabet_0),*&\mapsto(*,\textsf{InpDone}),*\\
\end{aligned}\TraName{input}\]
There are two considerations complicating the implementation of \TraRef{input}. First, the agent in \textsf{Inp} must count its own input. Second, the overall amount of input flags in the population must not change. We ensure the latter by marking agents with \textsf{O} (instead of e.g.\ consuming the input) and exchanging input flags (second to last line).
\subsubsection{Simulating Instructions}
Finally, we can start simulating the instructions of the counter machine. There are two types of instructions. \textsf{Incr} instructions increment a register and then go nondeterministically to one of two instructions. \textsf{Decr} instructions decrement a register and go to one of two instructions, depending on whether the resulting value is zero. The counter machine accepts by reaching the last instruction.
We make the following assumptions on the behaviour of the counter machine:
\begin{enumerate}[(P1)]
\item\label{ass:1} No increment that would cause an overflow is performed, nor is a decrement on an empty register.
\item\label{ass:2} If it is possible to accept from the initial configuration, every fair run will accept eventually.
\item\label{ass:3} Once reaching the final instruction, the counter machine loops and remains there.
\end{enumerate}
Let $\Instruction_1,...,\Instruction_{\InstructionLen}$ denote the instructions of the counter machine. The subprocess simulating the machine is led by the agent with flag \textsf{CM}; it stores the current instruction using flag $\textsf{IP}^s$, with $s\in\{1,...,\InstructionLen\}$.
Fix some instruction $\Instruction_s=(\mathrm{op},r,s_0,s_1)\in\{\textsf{Incr},\textsf{Decr}\}\times\{1,...,\Abs{\Alphabet}+3\}\times\{1,...,\InstructionLen\}^2$. If $\mathrm{op}=\textsf{Incr}$, we increment counter $\CounterMap(r)$ and move nondeterministically to instruction $s_0$ or $s_1$. Let \(i\in\N,b\in\binset\).
\[\begin{aligned}
(i,\textsf{CM}_1,\textsf{IP}^s_1),(*,\textsf{DigDone}_1)&\mapsto(i,\textsf{IP}^s_0,\textsf{IP}^{s_b}_1),(\CounterMap(r),\textsf{DigDone}_0,\textsf{DigIncr}_1)\\
\end{aligned}\TraName{cm-incr}\]
We remark that the digits have no concept of being grouped into
registers --- if digit $i$ overflows during an increment, the digit
helper moves on to the next digit, even if it “belongs” to a different
register. For our purposes, this is not a problem, since
property~(P\ref{ass:1}) ensures that the last digit of a register
never overflows.

If $\mathrm{op}=\textsf{Decr}$, we decrement counter $\CounterMap(r)$
and check whether it is zero. If so, we move to $s_1$, else
to $s_0$.
\[\begin{aligned}
  (i,\textsf{CM}_1,\textsf{IP}^s_1),(*,\textsf{DigDone}_1)&\mapsto
                                                            \begin{array}{c}
                                                              (i,\textsf{IP}^s_0,\textsf{IPA}^s_1),\\
                                                              (\CounterMap(r),\textsf{DigDone}_0,\textsf{DigDecr}_1)
                                                            \end{array}\\
(i,\textsf{CM}_1,\textsf{IPA}^s_1),(*,\textsf{DigDone}_1)&\mapsto(i,\textsf{IPA}^s_0,\textsf{IPB}^s_1),(*)\\
(i,\textsf{CM}_1,\textsf{IPB}^s_1),(*,\textsf{DetDone}_1)&\mapsto(i,\textsf{IPB}^s_0,\textsf{IPC}^s_1),(\CounterMap(r),\textsf{R}_0)\\
(i,\textsf{CM}_1,\textsf{IPC}^s_1),(j,\textsf{DetDone}_1,\textsf{R}_0)&\mapsto(i),(j,\textsf{DetDone}_0,\textsf{Det}_1,\textsf{M}_1)\\
&\hspace{24.5mm}\text{for }j<\CounterMap(r{+}1)\\
(i,\textsf{CM}_1,\textsf{IPC}^s_1),(\CounterMap(r{+}1),\textsf{DetDone}_1,\textsf{R}_0)&\mapsto(i,\textsf{IPC}^s_0,\textsf{IP}^{s_0}_1),(*)\\
(i,\textsf{CM}_1,\textsf{IPC}^s_1),(*,\textsf{DetDone}_1,\textsf{R}_1)&\mapsto(i,\textsf{IPC}^s_0,\textsf{IP}^{s_1}_1),(*)\\
\end{aligned}\TraName{cm-decr}\]

\subsubsection{Output}
\label{sec:io}
For the population protocol to have an output, we do a standard output broadcast. The agent simulating the counter machine outputs $1$ once the machine has reached the last instruction, and $0$ otherwise. All other agents copy that output.
\[\begin{aligned}
(*,\textsf{CM}_{\InstructionLen}),*&\mapsto(*,\textsf{Output}_1),*\\
(*,\textsf{CM}_1,\textsf{Output}_b),(*)&\mapsto(*),(*,\textsf{Output}_b)&&\text{for }b\in\{0,1\}\\
\end{aligned}\TraName{output}\]
And $\Output((q,S)):=1$ if $\textsf{Output}\in S$, else $\Output((q,S)):=0$.

\subsubsection{Starting the Simulation}
\label{sec:sim-start}
All that remains is initialising the above subprocesses. After cleanup, there will be one unique leader in \textsf{Start} (Lemma~\ref{lem:cleanup}). It creates the subprocesses for the counter and the digits. Then it starts the subprocess that distributes the agents in to the digits. Once that is finished, the leader starts the initialisation of the input registers, and after that, finally starts the counter machine simulation.
\[\begin{aligned}
(*,\textsf{Start}_1,\textsf{Go}_0),(*,\textsf{Free}_1)&\mapsto(*,\textsf{Go}_1,\textsf{GoA}_1),(*,\textsf{Free}_0,\textsf{Done}_1)\\
(*,\textsf{Start}_1,\textsf{GoA}_1),(*,\textsf{Free}_1)&\mapsto(*,\textsf{GoA}_0,\textsf{GoB}_1),(0,\textsf{Free}_0,\textsf{DetDone}_1)\\
(*,\textsf{Start}_1,\textsf{GoB}_1),(*,\textsf{Free}_1)&\mapsto(*,\textsf{GoB}_0,\textsf{GoC}_1),(0,\textsf{Free}_0,\textsf{DigDone}_1)\\
(*,\textsf{Start}_1,\textsf{GoC}_1),(*,\textsf{Free}_1)&\mapsto(*,\textsf{GoC}_0,\textsf{GoD}_1),(0,\textsf{Free}_0,\textsf{Dist}_1)\\
(*,\textsf{Start}_1,\textsf{GoD}_1),(*,\textsf{DistDone}_1)&\mapsto(*),(0,\textsf{DistDone}_0,\textsf{Inp}_1)\\
(*,\textsf{Start}_1,\textsf{GoD}_1),(*,\textsf{InpDone}_1)&\mapsto(*),(0,\textsf{InpDone}_0,\textsf{CM}_1,\textsf{IP}^1)\\
\end{aligned}\TraName{go}\]
This finally allows us to prove Lemma~\ref{thm:lower}:
\restateLower*
\newcommand{\RegCount}{\Gamma}
\begin{proof}
Let $\varphi\in\NSPACE(f(n)\log n)$ denote a predicate, where $\varphi:\N^{\Alphabet}\rightarrow\{0,1\}$. Then there is a $2^{cf(n)\log n}$-bounded counter machine $\Machine$ deciding $\varphi$, for some $c\in\N$, using $\RegCount:=\Alphabet+3$ registers. (The three additional counters are usually used to store the tape left of the head, right of the head, and as a temporary area to perform multiplication and division by constants.)

We may assume that $\Machine$ never exceeds its bounds (ensuring~(P\ref{ass:1})). Further, we can assume that $\Machine$ stores its inputs in some fashion and may nondeterministically restart, as long as it has not accepted. This yields (P\ref{ass:2}). Property~(P\ref{ass:3}) can easily be achieved by a syntactic modification.

Furthermore, it is enough to show that our uniform population protocol \(\Prot\) is correct for all inputs \(\geq n_0\) for some constant \(n_0\) by possibly taking a product with an \(\O(1)\) states population protocol computing \(\varphi\) for small inputs. 

We argue that there is a constant $\beta$, s.t.\ the construction from \Cref{sec:lowerbound} can simulate $\RegCount$ registers that are $2^{cf(n)\log n}$-bounded, using $\DigitCount(l_n):=\beta f(2^{l_n})$ digits in total. Each digit has at least $(n-l_n-5)/\DigitCount(l_n)-1$ agents and there are $\beta f(2^{l_n})/\RegCount$ digits per register. Taking the logarithm, we obtain
\[
\log\Big(\frac{n-l_n-5}{\beta f(2^{l_n})}-1\Big)^{\beta f(2^{l_n})/\RegCount}
\ge \frac{\beta f(n)}{\RegCount}\log\Big(\frac{n}{2\beta\cdot2f(n)}-1\Big)
\ge \frac{\beta f(n)}{\RegCount}\log\Big(\frac{n^\varepsilon}{d\beta}-1\Big)
\]
where $d\in\N$ is a constant s.t.\ $f(n)\le dn^{1-\varepsilon}$. We can further lower-bound this by $\varepsilon\beta/\RegCount\cdot f(n)\log n-\O(1)$. Choosing a suitably large constant $\beta$, this is at least $cf(n)\log n$, as desired.

It remains to argue that our construction is correct. Using lemmas~\ref{lem:init} and~\ref{lem:cleanup}, we know that the protocol eventually reaches a configuration with exactly one leader, a counter initialised to $n$, and all other agents in a well-defined state. Afterwards, at each step at most one agent can execute a transition, and correctness follows from careful inspection of the transitions defined above.
\end{proof}


\section{Conclusion}

We have characterised the expressive power of population protocols with $f\in\Omega(\log n)\cap\O(n^{1-\varepsilon})$ states. This closes the gap left open by prior research for uniform protocols, and gives the complexity for protocols with $\Theta(\log n)$ or $\Theta(\polylog n)$ states --- the most common constructions in the literature. Our characterisation applies to both uniform and non-uniform protocols.

The upper bound uses the Immerman-Szelepcsényi theorem to argue that a nondeterministic space-bounded Turing machine can simulate the protocol and determine whether it has stabilised. Similar arguments can be found in the literature \cite{BlondinEJ19}.

Our construction is more involved. It uses the standard idea of determining the total number of agents and then performing zero-checks, i.e.\ checking whether a state is absent by iterating over all agents. Using zero-checks, it is straightforward to simulate counter-machines. There are two main difficulties: First, with only $\O(\log n)$ states, no single agent can store $n$. Instead, we have to distribute that information over multiple agents (namely those with flag \textsf{Ctr}), and those agents must collaborate to perform computations on that number. Second, it is neither sufficient to use a constant number of counters with $n$ agents, nor to use $f(n)$ counters with constant number of agents (i.e.\ bits). We must do both at the same time, which results in the \textsf{Digit} agents. This is one main point where our construction improves upon \cite{BournezCR18} and prevents the loss of log factors.

We have focused on the expressive power of protocols that can run for an arbitrary amount of time. However, time-complexity plays an important role, and many constructions in the literature focus on being fast. Does limitting the running time affect the expressive power? We conjecture that such protocols can be modelled well by randomised, space-bounded Turing machines, but it is unclear whether one can obtain a characterisation in that case.

One important result about constant-state population protocols is the decidability of the verification problem~\cite{EsparzaGLM17} --- a natural question is whether this result can be extended to, e.g.\ protocols with $\Theta(\log n)$ states. Unfortunately, our characterisation answers this question in the negative. This does open the question of whether there exist subclasses that exclude our construction (and may, therefore, have a decidable verification problem), but include known constructions from the literature for e.g.\ the majority predicate.

Finally, one gap remains for \emph{non-uniform} (or weakly uniform) protocols with $o(\log n)$ states. In particular, is it possible to decide a non-semilinear predicate with $o(\log n)$ states? We conjecture $\UNL \coloneq \UENC(\NL)\subseteq\NUPP(\log\log n)$, i.e.\ there is a (non-uniform) population protocol with $\O(\log\log n)$ states for every predicate in $\UNL$, in particular for \(x \cdot y=z\) or for deciding whether a given input \(x\) is a prime number.

\bibliography{expressive_power_of_PP.bib}

\begin{thebibliography}{10}

\bibitem{AlistarhAEGR17}
Dan Alistarh, James Aspnes, David Eisenstat, Rati Gelashvili, and Ronald~L.
  Rivest.
\newblock Time-space trade-offs in population protocols.
\newblock In {\em {SODA} 2017}, pages 2560--2579. {SIAM}, 2017.
\newblock \href {https://doi.org/10.1137/1.9781611974782.169}
  {\path{doi:10.1137/1.9781611974782.169}}.

\bibitem{AlistarhG18}
Dan Alistarh and Rati Gelashvili.
\newblock Recent algorithmic advances in population protocols.
\newblock {\em {SIGACT} News}, 49(3):63--73, 2018.
\newblock \href {https://doi.org/10.1145/3289137.3289150}
  {\path{doi:10.1145/3289137.3289150}}.

\bibitem{AlistarhGV15}
Dan Alistarh, Rati Gelashvili, and Milan Vojnovic.
\newblock Fast and exact majority in population protocols.
\newblock In {\em {PODC}}, pages 47--56. {ACM}, 2015.

\bibitem{ADFP04}
Dana Angluin, James Aspnes, Zo{\"{e}} Diamadi, Michael~J. Fischer, and
  Ren{\'{e}} Peralta.
\newblock Computation in networks of passively mobile finite-state sensors.
\newblock In {\em {PODC} 2004}, pages 290--299. {ACM}.
\newblock \href {https://doi.org/10.1145/1011767.1011810}
  {\path{doi:10.1145/1011767.1011810}}.

\bibitem{AngluinAE06}
Dana Angluin, James Aspnes, and David Eisenstat.
\newblock Fast computation by population protocols with a leader.
\newblock In {\em {DISC}}, volume 4167 of {\em Lecture Notes in Computer
  Science}, pages 61--75. Springer, 2006.

\bibitem{AngluinAER07}
Dana Angluin, James Aspnes, David Eisenstat, and Eric Ruppert.
\newblock The computational power of population protocols.
\newblock {\em Distributed Comput.}, 20(4):279--304, 2007.
\newblock \href {https://doi.org/10.1007/S00446-007-0040-2}
  {\path{doi:10.1007/S00446-007-0040-2}}.

\bibitem{BellevilleDS17}
Amanda Belleville, David Doty, and David Soloveichik.
\newblock Hardness of computing and approximating predicates and functions with
  leaderless population protocols.
\newblock In {\em {ICALP}}, volume~80 of {\em LIPIcs}, pages 141:1--141:14.
  Schloss Dagstuhl - Leibniz-Zentrum f{\"{u}}r Informatik, 2017.

\bibitem{BerenbrinkEFKKR21}
Petra Berenbrink, Robert Els{\"{a}}sser, Tom Friedetzky, Dominik Kaaser, Peter
  Kling, and Tomasz Radzik.
\newblock Time-space trade-offs in population protocols for the majority
  problem.
\newblock {\em Distributed Comput.}, 34(2):91--111, 2021.

\bibitem{BerenbrinkGK20}
Petra Berenbrink, George Giakkoupis, and Peter Kling.
\newblock Optimal time and space leader election in population protocols.
\newblock In {\em {STOC}}, pages 119--129. {ACM}, 2020.

\bibitem{BerenbrinkKR19}
Petra Berenbrink, Dominik Kaaser, and Tomasz Radzik.
\newblock On counting the population size.
\newblock In {\em {PODC}}, pages 43--52. {ACM}, 2019.

\bibitem{BlondinEJ19}
Michael Blondin, Javier Esparza, and Stefan Jaax.
\newblock Expressive power of broadcast consensus protocols.
\newblock In {\em {CONCUR}}, volume 140 of {\em LIPIcs}, pages 31:1--31:16.
  Schloss Dagstuhl - Leibniz-Zentrum f{\"{u}}r Informatik, 2019.

\bibitem{BournezCR18}
Olivier Bournez, Johanne Cohen, and Mika{\"{e}}l Rabie.
\newblock Homonym population protocols.
\newblock {\em Theory Comput. Syst.}, 62(5):1318--1346, 2018.

\bibitem{ChatzigiannakisOSPS10}
Ioannis Chatzigiannakis, Othon Michail, Stavros Nikolaou, Andreas Pavlogiannis,
  and Paul~G. Spirakis.
\newblock Passively mobile communicating logarithmic space machines.
\newblock {\em CoRR}, abs/1004.3395, 2010.

\bibitem{ChatzigiannakisMNPS11}
Ioannis Chatzigiannakis, Othon Michail, Stavros Nikolaou, Andreas Pavlogiannis,
  and Paul~G. Spirakis.
\newblock Passively mobile communicating machines that use restricted space.
\newblock {\em Theor. Comput. Sci.}, 412(46):6469--6483, 2011.

\bibitem{DotyE19}
David Doty and Mahsa Eftekhari.
\newblock Efficient size estimation and impossibility of termination in uniform
  dense population protocols.
\newblock In {\em {PODC}}, pages 34--42. {ACM}, 2019.

\bibitem{DotyE21}
David Doty and Mahsa Eftekhari.
\newblock A survey of size counting in population protocols.
\newblock {\em Theor. Comput. Sci.}, 894:91--102, 2021.

\bibitem{DotyE22}
David Doty and Mahsa Eftekhari.
\newblock Dynamic size counting in population protocols.
\newblock In {\em {SAND}}, volume 221 of {\em LIPIcs}, pages 13:1--13:18.
  Schloss Dagstuhl - Leibniz-Zentrum f{\"{u}}r Informatik, 2022.

\bibitem{DotyEGSUS21}
David Doty, Mahsa Eftekhari, Leszek Gasieniec, Eric~E. Severson, Przemyslaw
  Uznanski, and Grzegorz Stachowiak.
\newblock A time and space optimal stable population protocol solving exact
  majority.
\newblock In {\em {FOCS}}, pages 1044--1055. {IEEE}, 2021.

\bibitem{DotyEMST18}
David Doty, Mahsa Eftekhari, Othon Michail, Paul~G. Spirakis, and Michail
  Theofilatos.
\newblock Brief announcement: Exact size counting in uniform population
  protocols in nearly logarithmic time.
\newblock In {\em {DISC}}, volume 121 of {\em LIPIcs}, pages 46:1--46:3.
  Schloss Dagstuhl - Leibniz-Zentrum f{\"{u}}r Informatik, 2018.

\bibitem{DotyS15}
David Doty and David Soloveichik.
\newblock Stable leader election in population protocols requires linear time.
\newblock In {\em {DISC}}, volume 9363 of {\em Lecture Notes in Computer
  Science}, pages 602--616. Springer, 2015.

\bibitem{ElsasserR18}
Robert Els{\"{a}}sser and Tomasz Radzik.
\newblock Recent results in population protocols for exact majority and leader
  election.
\newblock {\em Bull. {EATCS}}, 126, 2018.
\newblock URL:
  \url{http://bulletin.eatcs.org/index.php/beatcs/article/view/549/546}.

\bibitem{EsparzaGLM17}
Javier Esparza, Pierre Ganty, J{\'{e}}r{\^{o}}me Leroux, and Rupak Majumdar.
\newblock Verification of population protocols.
\newblock {\em Acta Informatica}, 54(2):191--215, 2017.
\newblock \href {https://doi.org/10.1007/S00236-016-0272-3}
  {\path{doi:10.1007/S00236-016-0272-3}}.

\bibitem{FischerMR68}
Patrick~C. Fischer, Albert~R. Meyer, and Arnold~L. Rosenberg.
\newblock Counter machines and counter languages.
\newblock {\em Math. Syst. Theory}, 2(3):265--283, 1968.

\bibitem{SoloveichikCWB08}
David Soloveichik, Matthew Cook, Erik Winfree, and Jehoshua Bruck.
\newblock Computation with finite stochastic chemical reaction networks.
\newblock {\em Nat. Comput.}, 7(4):615--633, 2008.

\end{thebibliography}

\end{document}